\documentclass[review]{elsarticle}

\usepackage{lineno,hyperref,mathtools}

\journal{Journal of \LaTeX\ Templates}

\usepackage{amssymb}
\usepackage{bm}
\usepackage{amssymb}
\usepackage{amsthm}
\usepackage{multirow,booktabs}
\usepackage{cases}
\usepackage{threeparttable}
\bibliographystyle{elsarticle-num}
\newtheorem{theorem}{Theorem}

\newtheorem{remark}{Remark}

\newtheorem{lemma}[theorem]{Lemma}
\newtheorem{corollary}[theorem]{Corollary}
\newtheorem{example}[theorem]{Example}
\begin{document}

\begin{frontmatter}

\title{Construction of MDS self-dual codes from generalized Reed-Solomon codes}
\tnotetext[mytitlenote]{}

\author[mymainaddress]{Ruhao Wan}
\ead{wanruhao98@163.com}

\author[mymainaddress]{Shixin Zhu\corref{mycorrespondingauthor}}
\cortext[mycorrespondingauthor]{Corresponding author}
\ead{zhushixinmath@hfut.edu.cn}

\author[mymainaddress]{Jin Li}
\ead{lijin\_0102@126.com}

\address[mymainaddress]{School of Mathematics, HeFei University of Technology, Hefei 23009, China}

\begin{abstract}
MDS codes and self-dual codes are important families of classical codes in coding theory. It is of interest to investigate MDS self-dual codes. The existence of MDS self-dual codes over finite field $F_q$ is completely solved for $q$ is even. In this paper, for finite field with odd characteristic, we construct some new classes of MDS self-dual codes by (extended) generalized Reed-Solomon codes.
\end{abstract}

\begin{keyword}
MDS self-dual codes\sep GRS code\sep Extended GRS code

\end{keyword}

\end{frontmatter}

\section{Introduction}
\label{intro}
Let $q$ be a prime power and let $F_{q}$ be the finite field with $q$ elements. An $[n,k,d]$-linear code $C$ over $F_{q}$ is a subspace of $F_{q}^{n}$ with dimension $k$ and minimum Hamming distance $d$. The Singleton bound says that $d\leq n-k+1$. If $d=n-k+1$, then $C$ is called a maximum distance separable (MDS) code. The Euclidean inner product of two vectors $\mathbf{x}=(x_{1},x_{2},...,x_{n})$ and $\mathbf{y}=(y_{1},y_{2},...,y_{n})$ in $F_{q}^{n}$ is defined by
$\mathbf{x}\cdot\mathbf{y}=\sum_{i=1}^{n} x_{i}y_{i}$.
The Euclidean dual code $C^{\bot}$ of $C$ is defined by
\begin{center}
$C^{\bot}=\{\mathbf{x}\in  F_{q}^{n}:\mathbf{x}\cdot\mathbf{c}=0,\ \forall\ \mathbf{c}\in C\}.$
\end{center}
The code $C$ is called Euclidean self-dual if $C=C^{\bot}$. If $C$ is MDS and Euclidean self-dual, $C$ is called an MDS Euclidean self-dual code.
Parameters of such codes are completely determined by its length, i.e.,$\ [n,\frac{n}{2},\frac{n}{2}+1]$ where $n$ is an even integer.
The MDS conjecture states that a non-trivial $q$-ary MDS code with parameters $[n,k,n-k+1]$ always has length $n\leq q+1$ unless $q=2^{m},n=q+2$ and $k\in\{3,q-1\}$, where $m$ is a positive integer. On the one hand, MDS codes have good properties and have attracted a lot of attention (see \cite{RefJ2,RefJ8}). A special case of MDS codes are referred to as generalized Reed-Solomon (GRS) codes. On the other hand, self-dual codes have attracted attention from coding theory, cryptography and have been found various applications in secret sharing schemes (see \cite{RefJ1,RefJ5}). More recently, the application of MDS codes renewed the interest in the construction of MDS self-dual codes (see \cite{RefJ11,RefJ12,RefJ17}).

Because parameters of MDS self-dual codes are completely determined by its length.
Then the key problem of this topic is to determine the existence of MDS self-dual codes for various lengths. This problem is completely solved for $q$ is even (see \cite{RefJ3}). It only need to consider the case $q$ is odd. Many researchers are interested in constructing MDS self-dual codes utilizing GRS codes (see \cite{RefJ15,RefJ16}).
In \cite{RefJ7}, Jin and Xing introduced a criterion to construct MDS self-dual codes through GRS codes and obtained some new MDS self-dual codes.
They also show that for any given even length $n$, there is a $q$-ary MDS code as long as $q\equiv 1({\rm mod}~4)$ and $q$ is sufficiently large.
In \cite{RefJ10}, Yan generalized the technique which was used in \cite{RefJ7} and constructed several new classes of MDS self-dual codes via GRS codes and extended GRS codes.
In \cite{RefJ9}, Xie and Fang constructed new classes of $r^m$-ary MDS self-dual codes of length $n=(t+1)r^e+1$, where $r$ is odd prime power, $t$ is even, $t\mid (r-1)$ and $\eta(t)=\eta(-1)=1$.

\newcommand{\tabincell}[2]{\begin{tabular}{@{}#1@{}}#2\end{tabular}}
\begin{table}
\caption{Parameters of some known MDS self-dual codes of length $n$.}
\label{tab:3}
\begin{center}
\resizebox{\textwidth}{85mm}{
	\begin{tabular}{cccc}
		\hline
		Class & $q$ & $n$ even & References\\
		\hline
		1 & $q$ even  &  $n \leq q$& \cite{RefJ3}  \\
		2 & $q$ odd  & $n=q+1$& \cite{RefJ3}  \\
		3 & $q$ odd & $(n-1)\mid (q-1)$, $\eta(1-n)=1$ & \cite{RefJ10}  \\
		4 & $q$ odd  & $(n-2)\mid (q-1)$, $\eta(2-n)=1$ & \cite{RefJ10}  \\
		5 & $q=r^s$, $r$ odd, $s\geq 2$  & $n=lr$, $l$ even and $2l\mid (r-1)$ & \cite{RefJ10}  \\
		6 & $q=r^s$, $r$ odd, $s\geq 2$  & $n=lr$, $l$ even, $(l-1)\mid (r-1)$ and $\eta(1-l)=1$& \cite{RefJ10}  \\
		7 & $q=r^s$, $r$ odd, $s\geq 2$  & $n=lr+1$, $l$ odd, $l\mid (r-1)$ and $\eta(l)=1$& \cite{RefJ10}  \\
		8 & $q=r^s$, $r$ odd, $s\geq 2$  & \tabincell{c}{$n=lr+1$, $l$ odd, $(l-1)\mid (r-1)$\\ and $\eta(l-1)=\eta(-1)=1$} & \cite{RefJ10}  \\
		9 & $q=r^2$, $r$ odd  & $n=tr$, $t$ even and $1\leq t\leq r$ & \cite{RefJ10}\\
	    10 & $q=r^2$, $r$ odd  & $n=tr+1$, $t$ odd and $1\leq t\leq r$ & \cite{RefJ10}\\
		11 & $q=p^k$, odd prime $p$ & $n=p^r+1$, $r\mid k$ & \cite{RefJ10}\\
		12 & $q=p^k$, odd prime $p$ & $n=2p^r$, $1\leq e <k$, $\eta(-1)=1$ & \cite{RefJ10}\\
		13 & $q\equiv1($mod$\,4)$ & $n\mid(q-1)$, $n<q-1$ & \cite{RefJ10}\\
		14 & $q\equiv1($mod$\,4)$ & $4^n\cdot n^2\leq q$ & \cite{RefJ7}\\
		15 & $q=r^2$ & $n\leq r$ & \cite{RefJ7} \\
		16 & $q=r^2$, $r\equiv 3($mod$\,4)$  & $n=2tr$ for any $t\leq \frac{r-1}{2}$& \cite{RefJ7}\\
        17 & $q=p^m\equiv 1($mod$\,4)$& $n=p^l+1$ with $0\leq l\leq m$ & \cite{RefJ4}\\
        18 & $q=r^m$, $m$ even, odd $r$& $n=tr^l$ with even $t$, $1\leq t\leq r-1$, $1\leq l<m$ & \cite{RefJ4}\\
        19 & $q=r^m$, $m$ even, odd $r$& $n=tr^l+1$ with odd $t$, $1\leq t \leq r-1$, $1\leq l<m$ & \cite{RefJ4}\\
        20 & $q=p^m$, $p$ odd prime & $n=2tp^e$, $2t\mid (q-1)$ even and $e<m$ & \cite{RefJ14}\\
        21 & $q=r^2$, $r$ odd& $n=tm$, $1\leq t \leq \frac{r-1}{gcd(r-1,m)}$, $\frac{q-1}{m}$ even & \cite{RefJ14}\\
        22 & $q=r^2$, $r$ odd& \tabincell{c}{$n=tm+1$, $tm$ odd, $1\leq t \leq \frac{r-1}{gcd(r-1,m)}$\\ and $m\mid (q-1)$} & \cite{RefJ14}\\
        23 & $q=r^2$, $r$ odd& \tabincell{c}{$n=tm+2$, $tm$ even, $1\leq t \leq \frac{r-1}{gcd(r-1,m)}$\\ and $m\mid (q-1)$} & \cite{RefJ14}\\
        24 & $q=r^2$, $r\equiv 1($mod$\,4)$ and $s$ even & \tabincell{c}{$n=s(r-1)+t(r+1)$ with $1\leq s\leq \frac{r+1}{2}$\\ and $1\leq t\leq \frac{r-1}{2}$} & \cite{RefJ18}\\
        25 & $q=r^2$, $r\equiv 3($mod$\,4)$ and $s$ odd & \tabincell{c}{$n=s(r-1)+t(r+1)$ with $1\leq s\leq \frac{r+1}{2}$\\ and $1\leq t\leq \frac{r-1}{2}$} & \cite{RefJ18}\\
        26 & $q=r^s$, $r$ odd and $s\geq 2$ &  \tabincell{c}{$n=(t+1)r^z$, odd $t$ with $t\mid (r-1)$, \\$\eta(-t)=1$ and $1\leq z\leq s-1$ }& \cite{RefJ6},\cite{RefJ9}\\
        27 & $q=r^s$, $r$ odd and $s\geq 2$ &  \tabincell{c}{$n=tr^z+1$, odd $t$ with $t\mid (r-1)$, \\$\eta((-1)^{\frac{r^z+1}{2}}t)=1$ and $1\leq z\leq s-1$ }& \cite{RefJ6}\\
        28 & $q=r^s$, $r$ odd and $s\geq 2$ &  \tabincell{c}{$n=(t+1)r^z+1$, even $t$ with $t\mid (r-1)$, \\$\eta(-1)=\eta(t)=1$ and $1\leq z\leq s-1$ }& \cite{RefJ9}\\
        29 & $q=p^s\equiv 3($mod$\,4)$& $n=p^e+1$ odd $e$ with $1\leq e\leq s$ & \cite{RefJ6}\\
		\hline
	\end{tabular}}
    \begin{tablenotes}
     \footnotesize
    \item Note: In \cite{RefJ6}, \cite{RefJ15}, \cite{RefJ16} and \cite{RefJ20} many MDS codes on $q=r^2$ are also introduced.
    \end{tablenotes}
\end{center}
\end{table}

In \cite{RefJ13}, Zhang and Feng showed that when $q\equiv 3($mod$\,4)$ and $n\equiv 2($mod$\,4)$, $q$-ary MDS self-dual codes of length $n$ do not exist.
Afterwards, in \cite{RefJ14}, Lebed and Liu showed that there exist $p^m$-ary MDS self-dual codes of length $n=2tp^e$, for any $t$ with $2t\mid(p-1)$, $e<m$ and $\frac{q-1}{2t}$ is even.
In \cite{RefJ6}, they also showed that there exist MDS self-dual codes of length $n=tr^z+1$, where $t\mid r-1$ and $\eta((-1)^{\frac{r^z+1}{2}}t)=1$, for any  integer $z$ with $1\leq z\leq s-1$.

In the paper, we obtain some new results on the existence of MDS self-dual codes by (extended) GRS codes. Precisely, our main contribution is to construct new MDS self-dual codes (see Table \ref{tab:3}).

\begin{table}
\caption{Our results}
\label{tab:3}
\begin{center}
\resizebox{\textwidth}{50mm}{
	\begin{tabular}{ccc}
		\hline
		$q$  & $n$ even & References\\
		\hline
		$q=r^m$, $r$ odd and $q\equiv 1({\rm mod}~4)$   &  $n=2tr^e$, $0\leq e\leq m-1$, $2t\mid r-1$ and $t\neq \frac{r-1}{2}$ & Theorem \ref{th1} \\
		$q=r^m$, $r$ odd                                &  \tabincell{c}{$n=(t+1)r^e+1$, $0\leq e\leq m-1$, $t$ even and $t\mid r-1$,\\
                                                                      if $\eta(t)=\eta(-1)=1$, or $\eta(-t)=1$ and $e$ even}    & Theorem \ref{th4} \\
		$q=p^m$, odd prime $p$ and $q\equiv 1({\rm mod}~4)$ & \tabincell{c}{$n=(t+1)p^e$, $0\leq e\leq m-1$, $2\leq t\leq p-1$ and $t$ odd,\\
                                                                      if $\eta(i(t+1-i))=1$ for $1\leq i\leq \frac{t-1}2$}      & Theorem \ref{th2} \\
		$q=p^m$, odd prime $p$ and $q\equiv 1({\rm mod}~4)$ & \tabincell{c}{$n=(t+1)p^e+1$, $0\leq e\leq m-1$, $2\leq t\leq p-1$ and $t$ even,\\
                                                                      if $\eta(i (t+1-i))=1$ for $1\leq i\leq \frac{t}2$}      & Theorem \ref{th3} \\
		$q=r^{sm}$, $r$, $m$ odd and $q\equiv 1({\rm mod}~4)$ & \tabincell{c}{$n=tr^e(1+r^s+\dots+r^{s(m-1)})$, $0\leq e \leq s-1$, $0<t<r-1$,\\
                                                                    if $t$ even and $t\mid r-1$}                           & Theorem \ref{th8} \\
		$q=r^{sm}$, $r$, $m$ odd                 & \tabincell{c}{$n=(t+1)r^e(1+r^s+\dots+r^{s(m-1)})$, $0\leq e \leq s-1$, $0<t<r$,\\
                                                                       if $t$ odd, $t\mid r-1$ and $\eta(-t)=1$}            & Theorem \ref{th9} \\
		$q=r^{sm}$, $r$, $m$ odd                 & \tabincell{c}{$n=tr^e(1+r^s+\dots+r^{s(m-1)})+1$, $0\leq e\leq s-1$, $0<t<r$,\\
                                                                       if $t$ odd, $t\mid r-1$ and $\eta((-1)^{\frac{r^e+1}{2}}t)=1$} & Theorem \ref{th10} \\
		$q=r^{sm}$, $r$, $m$ odd                 & \tabincell{c}{$n=(t+1)r^e(1+r^s+\dots+r^{s(m-1)})+1$, $0 \leq e\leq s-1$, $0<t<r-1$,\\
                                                        $t$ even and $t\mid r-1$, if $\eta(t)=\eta(-1)=1$, or $\eta(-t)=1$ and $e$ even } & Theorem \ref{th11}\\
		$q=r^2$, $r$ odd and $q-1=ef$           & \tabincell{c}{$n=tf$, $1\leq t\leq \frac{s(r+1)}{gcd(s(r+1),f)}$, $s\mid f$ and $s\mid r-1$,\\
                                                if $e$ and $\frac{r-1+tf}{s}$ even} & Theorem \ref{th tf tf+2}(1) \\
		$q=r^2$, $r$ odd and $q-1=ef$           & \tabincell{c}{$n=tf+2$, $1\leq t\leq \frac{s(r+1)}{gcd(s(r+1),f)}-1$, $s\mid f$ and $s\mid r-1$,\\
                                                if $\frac{f}{s}$ and $\frac{1}{2}(t-1)(r+1)$ even, or $\frac{f}{s}$ odd and $t$ even} & Theorem \ref{th tf tf+2}(2) \\
		$q=r^2$, $r$ odd and $q-1=ef$           & \tabincell{c}{$n=tf+2$, $t=\frac{s(r+1)}{gcd(s(r+1),f)}$, $s\mid f$ and $s\mid r-1$,\\
                                                if $\frac{ft}{s}$ and $\frac{t-1}{2}(r+1-\frac{ft}{s})$ even} & Theorem \ref{th tf tf+2}(3)\\
		$q=r^2$, $r$ odd and $q-1=ef$           & \tabincell{c}{$n=tf+1$, $1\leq t\leq \frac{s(r-1)}{gcd(s(r-1),f)}$, $s\mid f$ and $s\mid r+1$,\\
                                                if $tf$ odd} & Theorem \ref{th tf tf+1 tf+2} \\
		$q\equiv 1({\rm mod}~4)$              & $q>(t+(t^2+(n-1)2^{n-2})^{\frac{1}{2}})^2$, where $t=(n-3)2^{n-3}+\frac{1}{2}$ & Theorem \ref{th lager q} \\
		\hline
	\end{tabular}}
\end{center}
\end{table}
This paper is organized as follows.
In Section 2, we will introduce some basis knowledge and auxiliary results on GRS codes and extended GRS codes.
In Section 3, by (extended) GRS codes, we present some new MDS self-dual codes.
Finally, we give a short summary of this paper in Section 4.

\section{Preliminaries}

In this section, we recall some basic properties and results about GRS and extended GRS codes.
Throughout this paper, let $F_{q}$ be the finite field with $q$ elements, and let $n$ be a positive integer with $1<n<q$.

A multiplicative character of $F_q$ is a nonzero function $\psi$ from $F_q^*$ to the set of complex numbers such that $\psi(xy)=\psi(x)\psi(y)$
for all $x,y\in F_q^*$.
Let $\theta$ be a fixed primitive element of $F_q^*$.
For each $j=0,1,\dots,q-2$, the function with
\begin{equation}
\psi_j(\theta^k)=e^{\frac{2ijk\pi}{q-1}},\quad for\quad k=0,1,\dots,q-1
\end{equation}
defines a multiplicative character.
The multiplicative character $\eta=\psi_{\frac{q-1}{2}}$ is called quadratic character of $F_q$, i.e.,
$\eta(\theta^k)=1$ if and only if $k$ is even.

Choose $\mathbf{a}=\{\alpha_{1},\alpha_{2},...,\alpha_{n}\}$ to be an $n$-tuple of distinct elements of $F_{q}$,
and let $\mathbf{v}=\{v_{1},v_{2},...,v_{n}\}$ with $v_{i}\in F_{q}^{*}$.
For an integer $k$ with $0\leq k\leq n$, the generalized Reed-Solomon (GRS) code is defined by
\begin{equation}\label{eq1}
GRS_{k}(\mathbf{a},\mathbf{v})=\{(v_{1}f(\alpha_{1})),v_{2}f(\alpha_{1}),\dots,v_{n}f(\alpha_{n}):f(x)\in F_{q}[x],\deg(f(x))\leq k-1\}.
\end{equation}
It is well known that the code $GRS_{k}(\mathbf{a},\mathbf{v})$ is a $q$-ary $[n,k,n-k+1]$ MDS codes, and its dual is also MDS. Let
\begin{equation}\label{EQ3}
f_{\mathbf{a}}(x)=\prod_{\alpha\in \mathbf{a}}(x-\alpha)\quad and \quad L_{\mathbf{a}}(\alpha_i)=\prod_{1\leq j\leq n,j\neq i}(\alpha_i-\alpha_j),
\end{equation}
which will be used frequently in this paper.

The following lemma gives a criterion for a GRS code to be self-dual.
\begin{lemma}(\cite{RefJ7})\label{lem1}
Assume the notations given above.
For an even integer $n$, and $k=\frac{n}{2}$,
if there exists an element $\lambda\in F_q^*$ such that $\eta(\lambda L_\mathbf{a}(a_i))=1$ for all $1\leq i \leq n$,
then the code $GRS_k(\mathbf{a},\mathbf{v})$ defined in (\ref{eq1}) is an MDS self-dual code, where $v_i^2=(\lambda L_\mathbf{a}(a_i))^{-1}$ for $1\leq i \leq n$.
\end{lemma}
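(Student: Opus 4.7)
The plan is to reduce the self-duality of $GRS_k(\mathbf{a},\mathbf{v})$ to the known description of its Euclidean dual as another GRS code on the same evaluation points, and then compare multiplier vectors.

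First, I would recall (or quickly derive from a Vandermonde/partial-fraction computation) the standard duality formula for GRS codes: for $k=n/2$,
\begin{equation*}
GRS_{k}(\mathbf{a},\mathbf{v})^{\perp}=GRS_{n-k}(\mathbf{a},\mathbf{u}),
\end{equation*}
where $u_i=\bigl(v_i L_{\mathbf{a}}(\alpha_i)\bigr)^{-1}$ and $L_{\mathbf{a}}(\alpha_i)$ is defined in (\ref{EQ3}). Since $k=n/2=n-k$, both codes have the same dimension, and they share the evaluation set $\mathbf{a}$; the only remaining question is when their multiplier vectors produce the same subspace of $F_q^n$.

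Next I would observe the elementary fact that two GRS codes $GRS_{k}(\mathbf{a},\mathbf{v})$ and $GRS_{k}(\mathbf{a},\mathbf{u})$ on the same evaluation points coincide if and only if the ratios $u_i/v_i$ are equal for all $i$. (If $u_i/v_i=c$ for all $i$, the map $f\mapsto c^{-1}f$ is a bijection on polynomials of degree less than $k$ that carries one code exactly to the other; conversely, forcing the correspondence to respect the degree bound pins the ratio down to a constant.) Applying this to $u_i=(v_iL_{\mathbf{a}}(\alpha_i))^{-1}$, self-duality is equivalent to the existence of a constant $\lambda\in F_q^{*}$ with
\begin{equation*}
v_i^{2}L_{\mathbf{a}}(\alpha_i)=\lambda^{-1}, \qquad 1\leq i\leq n,
\end{equation*}
i.e.\ $v_i^{2}=(\lambda L_{\mathbf{a}}(\alpha_i))^{-1}$.

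Finally I would close the argument by using the quadratic character. The hypothesis $\eta(\lambda L_{\mathbf{a}}(\alpha_i))=1$ says that $\lambda L_{\mathbf{a}}(\alpha_i)$ is a nonzero square in $F_q$, hence so is its inverse, so each $v_i$ can actually be chosen in $F_q^{*}$ to satisfy $v_i^{2}=(\lambda L_{\mathbf{a}}(\alpha_i))^{-1}$. The above equivalence then gives self-duality, and the MDS property is automatic since every GRS code is MDS. The only real obstacle is the duality formula for GRS codes; once that is in hand the rest is a direct matching of coefficients and a translation of "a square root exists" into the quadratic-character condition.
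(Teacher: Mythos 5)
Your proposal is correct and follows essentially the same route as the source the paper cites for this lemma (the paper itself states Lemma~\ref{lem1} by reference to \cite{RefJ7} without reproducing a proof): use the duality formula $GRS_{k}(\mathbf{a},\mathbf{v})^{\perp}=GRS_{n-k}(\mathbf{a},\mathbf{u})$ with $u_i=(v_iL_{\mathbf{a}}(\alpha_i))^{-1}$, note that $u_i=\lambda v_i$ for all $i$ forces the two codes to coincide when $k=n/2$, and observe that the condition $\eta(\lambda L_{\mathbf{a}}(\alpha_i))=1$ is exactly what makes $v_i^{2}=(\lambda L_{\mathbf{a}}(\alpha_i))^{-1}$ solvable in $F_q^{*}$. (Only the easy direction of your ``coincide iff the ratios are constant'' claim is actually needed, so the slightly informal converse argument costs nothing.)
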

The above lemma can be rewritten as follow.
\begin{remark}(\cite{RefJ13})\label{rem1}
For an even integer $n$, and $k=\frac{n}{2}$, if $\eta(L_{\mathbf{a}}(a_{i}))$ are the same for all $1\leq i \leq n$,
then the code $GRS_k(\mathbf{a},\mathbf{v})$ defined in (\ref{eq1}) is an MDS self-dual code.
\end{remark}

Now we introduce some basic notations and results on extended GRS codes.
The $k$-dimensional extended GRS codes of length $n$ is defined by
\begin{equation}\label{eq2}
GRS_{k}(\mathbf{a},\mathbf{v},\infty)=\{(v_{1}f(\alpha_{1})),\dots,v_{n-1}f(\alpha_{n-1}),f_{k-1}:f(x)\in F_{q}[x],\deg(f(x))\leq k-1\},
\end{equation}
where $f_{k-1}$ is the coefficient of $x^{k-1}$ in $f(x)$. It is well known that the code $GRS_{k}(\mathbf{a},\mathbf{v},\infty)$ is a $q$-ary $[n+1,k,n-k+1]$ MDS codes, and its dual is also MDS.

\begin{lemma}(\cite{RefJ10})\label{lem2}
For an odd integer $n$, and $k=\frac{n+1}{2}$, if $\eta(-L_{\mathbf{a}}(a_i))=1$ for all $1\leq i \leq n$,
then the code $GRS_k(\mathbf{a},\mathbf{v},\infty)$ defined in (\ref{eq2}) is an MDS self-dual code, where $v_i^2=(-L_\mathbf{a}(a_i))^{-1}$ for all $1\leq i \leq n$.
\end{lemma}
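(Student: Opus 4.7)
The plan is to prove self-duality directly by showing that any two codewords of $GRS_k(\mathbf{a},\mathbf{v},\infty)$ are mutually orthogonal; since the code has dimension $k=\frac{n+1}{2}$, which is exactly half of the length $n+1$, mutual orthogonality forces equality with the dual code. So take two codewords $c_f$ and $c_g$ corresponding to $f,g\in F_q[x]$ with $\deg f,\deg g\le k-1$, and write
\[
c_f\cdot c_g \;=\; \sum_{i=1}^{n} v_i^2\, f(\alpha_i)\,g(\alpha_i) \;+\; f_{k-1}\,g_{k-1}.
\]

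The decisive step is to rewrite the ``extra'' term $f_{k-1}g_{k-1}$ as a sum over the evaluation points. Put $h(x)=f(x)g(x)$; because $\deg h\le 2(k-1)=n-1$, Lagrange interpolation at the $n$ distinct nodes $\alpha_1,\dots,\alpha_n$ reconstructs $h$ exactly. Comparing the coefficient of $x^{n-1}$ on both sides, and using that the only pair $(a,b)$ with $a,b\le k-1$ and $a+b=n-1$ is $(k-1,k-1)$, one obtains the identity
\[
f_{k-1}\,g_{k-1} \;=\; \sum_{i=1}^{n} \frac{f(\alpha_i)\,g(\alpha_i)}{L_{\mathbf{a}}(\alpha_i)}.
\]
Substituting back gives
\[
c_f\cdot c_g \;=\; \sum_{i=1}^{n} \Bigl(v_i^2 + L_{\mathbf{a}}(\alpha_i)^{-1}\Bigr) f(\alpha_i)\,g(\alpha_i),
\]
and the prescription $v_i^2 = -L_{\mathbf{a}}(\alpha_i)^{-1}$ kills every bracket simultaneously.

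The hypothesis $\eta(-L_{\mathbf{a}}(\alpha_i))=1$ is invoked precisely to guarantee that such a $v_i\in F_q^*$ exists: since $\eta(x^{-1})=\eta(x)$, the element $-L_{\mathbf{a}}(\alpha_i)^{-1}$ is a nonzero square in $F_q$, so one can choose $v_i\in F_q^*$ with $v_i^2 = -L_{\mathbf{a}}(\alpha_i)^{-1}$. Combining the orthogonality just established with the dimension count $k=(n+1)/2$ yields the self-duality claim; the MDS property is inherited from the fact that the extended GRS construction already gives a $[n+1,k,n-k+1]$ code.

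The only genuinely non-routine step is spotting the Lagrange-interpolation identity that turns $f_{k-1}g_{k-1}$ into a sum matching the inner-product term; the remaining ingredients (parity of the length, the multiplicativity $\eta(x^{-1})=\eta(x)$, and the dimension-count argument) are direct. In particular, the appearance of the minus sign in $\eta(-L_{\mathbf{a}}(\alpha_i))$, as opposed to the $\eta(\lambda L_{\mathbf{a}}(\alpha_i))$ in Lemma~\ref{lem1}, is dictated by the sign flip forced by the Lagrange identity when one treats the ``$\infty$'' coordinate rather than omitting it.
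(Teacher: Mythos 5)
Your proof is correct: the Lagrange-interpolation identity $\sum_{i=1}^{n} h(\alpha_i)/L_{\mathbf{a}}(\alpha_i)=h_{n-1}$ for $\deg h\leq n-1$, applied to $h=fg$ with the observation that only the pair $(k-1,k-1)$ contributes to the coefficient of $x^{n-1}$, is exactly the mechanism that makes the $\infty$-coordinate cancel, and the quadratic-character hypothesis is used only to extract the square roots $v_i$. The paper itself states this lemma as a citation to \cite{RefJ10} without reproducing a proof, and your argument is precisely the standard one given in that reference, so there is nothing to fault.
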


We give the following lemmas, which are useful in the proof of the main results.
\begin{lemma}(\cite{RefJ6})\label{lem3}
Suppose $q=r^m$, where $r$ is an odd prime power and $m$ is a positive integer.
Let $V$ be an $ F_r$-vector subspace of dimension $e$ in $F_q$, where $1\leq e\leq m$.
Then
\begin{equation}
\prod_{0\neq v\in V}v=(-1)^{\frac{r^e-1}{2}}\delta^2,
\end{equation}
for some $\delta\in F_q^*$.
\end{lemma}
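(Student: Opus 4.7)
The plan is to exploit the involution $v \mapsto -v$ on $V$, which is well-defined because $V$ is an $F_r$-subspace and in particular is closed under negation. Since $r$ (and hence the characteristic of $F_q$) is odd, the equation $v=-v$ forces $v=0$, so on $V\setminus\{0\}$ the map $v\mapsto -v$ is a fixed-point-free involution. Hence $V\setminus\{0\}$ partitions into $\frac{r^e-1}{2}$ unordered pairs $\{v,-v\}$, which is meaningful because $|V|=r^e$ is odd and so $r^e-1$ is even.

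Next, I would compute the product pair by pair. For each pair $\{v,-v\}$ the contribution is $v\cdot(-v)=-v^2$. Multiplying over all pairs gives
\begin{equation*}
\prod_{0\neq v\in V} v \;=\; \prod_{\{v,-v\}}(-v^2)\;=\;(-1)^{\frac{r^e-1}{2}}\prod_{\{v,-v\}} v^2.
\end{equation*}
Choosing one representative $v$ from each pair and setting $\delta:=\prod_{\{v,-v\}}v\in F_q^*$ (nonzero as a product of nonzero elements, and independent of representative choice up to sign, which does not affect $\delta^2$), we obtain $\prod_{\{v,-v\}} v^2=\delta^2$, and therefore
\begin{equation*}
\prod_{0\neq v\in V} v \;=\; (-1)^{\frac{r^e-1}{2}}\delta^2,
\end{equation*}
as claimed.

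There is essentially no hard step here; the only thing to verify carefully is that the pairing is genuinely into sets of size two (so that no element gets double-counted and no element is left unpaired), which is precisely why the assumption that $r$ is odd is needed. The hypothesis that $V$ is an $F_r$-subspace is used only to guarantee closure under $v\mapsto -v$; the dimension $e$ enters only through $|V|=r^e$, and plays no further role in the argument.
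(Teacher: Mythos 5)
Your proof is correct and complete. The paper itself gives no proof of this lemma (it is quoted from \cite{RefJ6}), so there is nothing to compare against line by line, but your pairing argument is the standard one for this fact: since the characteristic is odd, $v\mapsto -v$ is a fixed-point-free involution on $V\setminus\{0\}$, the $\frac{r^e-1}{2}$ pairs each contribute $-v^2$, and taking $\delta$ to be the product of one representative per pair (well-defined up to sign, hence $\delta^2$ is well-defined) yields exactly $(-1)^{\frac{r^e-1}{2}}\delta^2$ with $\delta\neq 0$. You also correctly identify where each hypothesis is used: oddness of $r$ rules out fixed points, and the subspace structure is needed only for closure under negation.
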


\begin{lemma}(\cite{RefJ10})\label{lem4}
Let $m\mid (q-1)$ be a positive integer and let $\alpha\in F_q$ be a primitive $m$-th root of unity.
Then, for any $1\leq i\leq m$,
\begin{equation}
\prod_{1\leq j \leq m,j\neq i}(\alpha^i-\alpha^j)=m\alpha^{(m-1)i}=m\alpha^{-i}.
\end{equation}
\end{lemma}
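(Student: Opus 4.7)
The plan is to use the fact that $\alpha^1,\alpha^2,\dots,\alpha^m$ are exactly the $m$-th roots of unity in $F_q$, and exploit the factorization of $x^m-1$. Specifically, since $\alpha$ is a primitive $m$-th root of unity, we have
\begin{equation*}
x^m-1=\prod_{j=1}^{m}(x-\alpha^j)
\end{equation*}
as polynomials in $F_q[x]$. Dividing the right-hand side by the factor $(x-\alpha^i)$ (which is valid for each fixed $i$) yields
\begin{equation*}
\prod_{1\leq j\leq m,\,j\neq i}(x-\alpha^j)=\frac{x^m-1}{x-\alpha^i},
\end{equation*}
and the desired product $\prod_{j\neq i}(\alpha^i-\alpha^j)$ is obtained by specializing $x=\alpha^i$.

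Since substitution gives a $0/0$ indeterminate form, I would evaluate this limit by passing to the formal derivative: let $g(x)=x^m-1$, so $\alpha^i$ is a simple root of $g$, and
\begin{equation*}
\prod_{1\leq j\leq m,\,j\neq i}(\alpha^i-\alpha^j)=g'(\alpha^i)=m(\alpha^i)^{m-1}=m\alpha^{i(m-1)}.
\end{equation*}
This is the standard identity $g'(\beta)=\prod_{\gamma\neq\beta}(\beta-\gamma)$ when $\beta$ is a simple root of $g$, which over any field follows by writing $g(x)=(x-\beta)h(x)$, differentiating, and evaluating at $\beta$. Note that $m$ is nonzero in $F_q$ because $m\mid q-1$ forces $\gcd(m,q)=1$, so the derivative argument carries no characteristic issues.

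Finally I would simplify the exponent using $\alpha^m=1$: namely $\alpha^{i(m-1)}=\alpha^{im}\alpha^{-i}=\alpha^{-i}$, giving the advertised form $m\alpha^{-i}$. The only thing to be mindful of is that the derivative identity and the cancellation both rely on $\alpha^i$ being a simple root, which is automatic since the $\alpha^j$'s are distinct; there is no genuine obstacle in this lemma, and the whole argument is essentially a one-line differentiation once the factorization is written down.
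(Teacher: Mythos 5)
Your proof is correct: the paper states this lemma without proof (citing \cite{RefJ10}), and your argument via the factorization $x^m-1=\prod_{j=1}^m(x-\alpha^j)$ and the formal derivative $g'(\alpha^i)=m\alpha^{i(m-1)}$ is exactly the standard one used in that reference. The only cosmetic point is that the ``$0/0$ limit'' phrasing is unnecessary over a finite field, but you immediately replace it with the correct purely algebraic step of writing $g(x)=(x-\alpha^i)h(x)$ and evaluating $h(\alpha^i)$, and you correctly note that $m\mid q-1$ guarantees $m\neq 0$ in $F_q$.
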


\begin{lemma}(\cite{RefJ18})\label{lem5}
(1) Let $S_1$ and $S_2$ be disjoint subsets of $F_q$, $S=S_1\bigcup S_2$.
Then for $b\in S$,
\[ L_{S}(b)=\begin{cases}
L_{S_1}(b)f_{S_2}(b),\quad if\  b\in S_1 \\
L_{S_2}(b)f_{S_1}(b),\quad  if\  b\in S_2.
\end{cases}\]\\
(2) Let $\theta$ be a primitive element of $F_q^*$ and $q-1=ef$.
Denote $H=\langle\theta^{e}\rangle$, then
\begin{equation}
f_{\theta^i H}(x)=x^{f}-\theta^{if} \quad and \quad L_{\theta^i H}(x)=f'_{\theta^i H}(x)=fx^{f-1}.
\end{equation}
\end{lemma}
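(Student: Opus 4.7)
The plan is to verify both assertions directly from the definitions $f_{\mathbf{a}}(x)=\prod_{\alpha\in\mathbf{a}}(x-\alpha)$ and $L_{\mathbf{a}}(\alpha_i)=\prod_{j\neq i}(\alpha_i-\alpha_j)$ introduced in (\ref{EQ3}). For part (1), since $S=S_1\cup S_2$ is a disjoint union, the factors in the product defining $L_S(b)$ can be grouped according to which subset they come from. If $b\in S_1$, then $b$ is the only ``omitted'' element among the factors coming from $S_1$, while every element of $S_2$ contributes without exception; this splits $L_S(b)$ as $L_{S_1}(b)\cdot f_{S_2}(b)$. The case $b\in S_2$ follows by symmetry.

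For part (2), I would first note that $\theta^e$ has order $f$ in $F_q^*$ (since $\mathrm{ord}(\theta)=q-1=ef$), so $H=\{1,\theta^e,\theta^{2e},\dots,\theta^{(f-1)e}\}$ is exactly the group of $f$-th roots of unity in $F_q$. Consequently $\theta^i H=\{\theta^{i+je}:0\le j\le f-1\}$, and each such element $\alpha=\theta^{i+je}$ satisfies $\alpha^f=\theta^{if+jef}=\theta^{if}$ because $\theta^{ef}=\theta^{q-1}=1$. Hence every element of $\theta^i H$ is a root of $x^f-\theta^{if}$, and as both polynomials are monic of degree $f$ they must coincide, giving $f_{\theta^i H}(x)=x^f-\theta^{if}$. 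The formula for $L_{\theta^i H}$ then follows from the standard identity $g'(\alpha_i)=\prod_{j\ne i}(\alpha_i-\alpha_j)$ for any monic $g(x)=\prod_j(x-\alpha_j)$ with distinct roots; differentiating $x^f-\theta^{if}$ yields $fx^{f-1}$, so $L_{\theta^i H}(x)=fx^{f-1}$ on $\theta^i H$.

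There is no real obstacle here: both parts are elementary consequences of the definitions together with the fact that a monic polynomial is determined by its (distinct) root set. The only mild subtlety is the bookkeeping in part (1) — one must ensure the omitted factor $(b-b)$ appears exactly once, on the side containing $b$, which is precisely why the formula is asymmetric with $L$ on the side containing $b$ and $f$ on the opposite side.
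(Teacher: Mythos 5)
Your proof is correct. The paper states this lemma without proof, citing \cite{RefJ18}, so there is no in-paper argument to compare against; your verification is the standard one — splitting the product defining $L_S(b)$ according to the disjoint union for part (1), and for part (2) identifying $\theta^i H$ with the root set of the monic polynomial $x^f-\theta^{if}$ and using the identity $L_{\mathbf a}(\alpha_i)=f_{\mathbf a}'(\alpha_i)$ — and it is exactly what the cited reference does. Your closing remark about the asymmetry in part (1) (the omitted factor $(b-b)$ must be dropped only on the side containing $b$) is the right point to be careful about.
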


\begin{lemma}(\cite{RefJ13})\label{lem xin}
Let $\theta$ be a primitive element of $F_q^*$ and $q-1=ef$.
Denote $H=\langle\theta^{e}\rangle$.
Let $S_i=\xi_iH$ $(1\leq i\leq t)$ be $t$ distinct cosets of $H$ in $F_q^*$ $(0\leq t\leq e-1)$,
$S=\bigcup_{i=1}^t S_i$, $\mid S\mid=tf$.
Then
\begin{equation}
f_S(x)=\prod_{i=1}^{t}\prod_{j=0}^{f}(x-\xi_i\theta^{ej})=\prod_{i=1}^{t}(x^{f}-\xi_i^f)=g(x^f),
\end{equation}
where $g(x)=\prod_{i=1}^t(x-\xi_i^f)=f_{S'}(x)$, $S'=\{\xi_i^f: 1\leq i \leq t\}$.
\end{lemma}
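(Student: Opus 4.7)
The statement is essentially a chain of three elementary equalities, so the plan is to verify each link cleanly and then confirm that $g=f_{S'}$ really matches the definition (i.e.\ that the $\xi_i^f$ are distinct).

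First, I would unpack the subgroup $H=\langle\theta^{e}\rangle$. Since $\theta$ has order $q-1=ef$, the element $\theta^{e}$ has order $f$, so
\[
H=\{1,\theta^{e},\theta^{2e},\dots,\theta^{(f-1)e}\}
\]
is precisely the group $\mu_f$ of $f$-th roots of unity in $F_q^*$, and each coset $\xi_iH$ has exactly $f$ elements. Because the $S_i$ are $t$ distinct cosets, they are pairwise disjoint, so $|S|=tf$ and by definition
\[
f_S(x)=\prod_{i=1}^{t}\prod_{j=0}^{f-1}(x-\xi_i\theta^{ej}),
\]
which gives the first equality (the upper index $f$ in the statement is a typographical slip for $f-1$).

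Next, I would invoke the factorization $\prod_{j=0}^{f-1}(x-\theta^{ej})=x^{f}-1$, valid because the $\theta^{ej}$ are the $f$ distinct roots of $x^{f}-1$. Replacing $x$ by $x/\xi_i$ and multiplying through by $\xi_i^{f}$ yields
\[
\prod_{j=0}^{f-1}(x-\xi_i\theta^{ej})=x^{f}-\xi_i^{f},
\]
and taking the product over $i$ delivers the second equality. The third equality, $\prod_{i=1}^{t}(x^{f}-\xi_i^{f})=g(x^{f})$, is then immediate from the definition of $g$ by substituting $y=x^{f}$.

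The only point that still needs care is the claim $g(x)=f_{S'}(x)$, which requires that $|S'|=t$, i.e.\ the $\xi_i^{f}$ are distinct. This is the place where the coset hypothesis is used: if $\xi_i^{f}=\xi_j^{f}$, then $(\xi_i\xi_j^{-1})^{f}=1$, so $\xi_i\xi_j^{-1}\in\mu_f=H$, forcing $\xi_iH=\xi_jH$ and hence $i=j$. I do not foresee any real obstacle; the whole argument is a bookkeeping exercise resting on the identification $H=\mu_f$ and on the factorization of $x^{f}-1$ over $F_q$.
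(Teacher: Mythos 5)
Your proof is correct and complete: the identification $H=\mu_f$, the factorization $\prod_{j=0}^{f-1}(x-\xi_i\theta^{ej})=x^f-\xi_i^f$, and the check that the $\xi_i^f$ are distinct (so that $g=f_{S'}$) are exactly the ingredients needed, and you are right that the upper limit $j=f$ in the displayed product is a typo for $f-1$. The paper itself gives no proof — the lemma is quoted from \cite{RefJ13} — and your argument is the standard one that the cited source uses.
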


\section{Main results}
In this section, we construct some new classes of MDS self-codes over finite fields with odd characteristic by GRS codes and extended GRS codes.

\begin{lemma} \label{lem7}
Let $q=r^m$, where $r$ is an odd prime power and $m$ is a positive integer.
Let $V=\{v_1,v_2,\ldots, v_{r^e} \}$ be a $e$-dimensional $F_r$-vector subspace in $F_q$, where $0\leq e \leq m-1$. For any even integer $t$ with $1\leq t \leq r-1$, let $\mathbf{b}=(\beta_1,\beta_2,\dots,\beta_t)$ be a $t$-tuple of distinct elements in $F_r$.
Let $\zeta$ be a fixed element in $F_q\setminus V$.
Let
\begin{equation}
W=\bigcup_{j=1}^t(\beta_j\zeta+V).
\end{equation}
Let $\mathbf{a}=(\alpha_{11},\dots,\alpha_{1r^e},\dots,\alpha_{tr^e})$, where $\alpha_{ij}=\beta_i\zeta+v_j$. Then there exits a $\lambda\in  F_q^*$ such that $\eta(\lambda L_{\mathbf{a}}(\alpha_{ij}))=\eta(L_{\mathbf{b}}(\beta_i))$ for any $1\leq i \leq t$, $1\leq j \leq r^e$.
\end{lemma}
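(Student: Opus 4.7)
}
The plan is to compute $L_{\mathbf{a}}(\alpha_{ij})$ explicitly and factor out everything that does not depend on $(i,j)$. Start by splitting the product defining $L_{\mathbf{a}}(\alpha_{ij})$ according to whether the summation index $k$ in $\alpha_{kl}$ equals $i$ or not:
\begin{equation*}
L_{\mathbf{a}}(\alpha_{ij})=\Bigl(\prod_{l\neq j}(\alpha_{ij}-\alpha_{il})\Bigr)\cdot\Bigl(\prod_{k\neq i}\prod_{l=1}^{r^e}(\alpha_{ij}-\alpha_{kl})\Bigr).
\end{equation*}

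For the first factor, $\alpha_{ij}-\alpha_{il}=v_j-v_l$, and since $V$ is an $F_r$-subspace with $v_j\in V$, the map $v_l\mapsto v_j-v_l$ is a bijection of $V$ sending $v_j$ to $0$. Hence
\begin{equation*}
\prod_{l\neq j}(v_j-v_l)=\prod_{0\neq v\in V}v,
\end{equation*}
which depends neither on $i$ nor on $j$. Call this scalar $A$; by Lemma \ref{lem3}, $\eta(A)=\eta((-1)^{(r^e-1)/2})$ and in particular $A\neq 0$.

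For the second factor, write $\alpha_{ij}-\alpha_{kl}=(\beta_i-\beta_k)\zeta+v_j-v_l$. Fixing $k\neq i$, the inner product over $l$ is $\prod_{v\in V}\bigl((\beta_i-\beta_k)\zeta+v_j-v\bigr)=P_V\bigl((\beta_i-\beta_k)\zeta+v_j\bigr)$, where $P_V(x):=\prod_{v\in V}(x-v)$ and where I used $v_j-V=V$. The crucial structural fact is that $P_V$ is an $F_r$-linearized $r$-polynomial, because $V$ is an $F_r$-subspace; consequently $P_V(x+v_j)=P_V(x)$ for $v_j\in V$ and $P_V(c\,x)=c\,P_V(x)$ for $c\in F_r$. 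Applying both, with $\beta_i-\beta_k\in F_r^*$, gives
\begin{equation*}
P_V\bigl((\beta_i-\beta_k)\zeta+v_j\bigr)=(\beta_i-\beta_k)\,P_V(\zeta).
\end{equation*}
Note $P_V(\zeta)\neq 0$ because $\zeta\notin V$. Multiplying over $k\neq i$ produces
\begin{equation*}
\prod_{k\neq i}\prod_{l=1}^{r^e}(\alpha_{ij}-\alpha_{kl})=L_{\mathbf{b}}(\beta_i)\cdot P_V(\zeta)^{\,t-1}.
\end{equation*}

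Combining the two factors yields the clean formula
\begin{equation*}
L_{\mathbf{a}}(\alpha_{ij})=A\cdot P_V(\zeta)^{\,t-1}\cdot L_{\mathbf{b}}(\beta_i),
\end{equation*}
in which the first two factors are nonzero constants independent of $(i,j)$. Choosing $\lambda=\bigl(A\cdot P_V(\zeta)^{\,t-1}\bigr)^{-1}$ makes $\lambda L_{\mathbf{a}}(\alpha_{ij})=L_{\mathbf{b}}(\beta_i)$, whose quadratic character is exactly what is required. The only substantive step is the linearized-polynomial reduction in the second factor; once that observation is in hand, the rest is bookkeeping.
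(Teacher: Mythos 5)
Your proof is correct and follows essentially the same route as the paper: both split $L_{\mathbf{a}}(\alpha_{ij})$ into the within-coset factor $\prod_{0\neq v\in V}v$ and the cross-coset factor, and reduce the latter to $L_{\mathbf{b}}(\beta_i)$ times $\bigl(\prod_{v\in V}(\zeta+v)\bigr)^{t-1}$ -- your linearized-polynomial identity $P_V(c\zeta+v_j)=cP_V(\zeta)$ is just a repackaging of the paper's explicit substitution $(\beta_k-\beta_i)\zeta+v-v_j=(\beta_k-\beta_i)\bigl(\zeta+(\beta_k-\beta_i)^{-1}(v-v_j)\bigr)$. The only (harmless) difference is that you take $\lambda$ to be the inverse of the constant $C$ so that $\lambda L_{\mathbf{a}}(\alpha_{ij})=L_{\mathbf{b}}(\beta_i)$ exactly, which lets you bypass the appeal to Lemma \ref{lem3}, whereas the paper sets $\lambda=C$ and uses $\eta(C^2)=1$.
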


\begin{proof}
For $b=\beta_k\zeta+v$ with $1\leq k\leq t$ and $v\in V$,
\[\begin{split}
	L_{\mathbf{a}}(b)&= \prod_{1\leq i\leq t,1\leq j \leq r^e,b \neq \alpha_{ij}}(b-\alpha_{ij})\\
	&= \prod_{v\neq v_j} (v-v_{j}) \cdot \prod_{1\leq i\leq t,i\neq k,1\leq j \leq r^e, }[(\beta_k-\beta_i)\zeta+v-v_j] \\
	&=(\prod_{0\neq v\in V}v)\cdot \prod_{i=1,i\neq k}^t [(\beta_k-\beta_i)^{r^e} \prod_{j=1}^{r^e}(\zeta+(\beta_k-\beta_i)^{-1}(v-v_j))]\\
	&=(\prod_{0\neq v\in V}v)\cdot \prod_{i=1,i\neq k}^t [(\beta_k-\beta_i) \prod_{v\in V}(\zeta+v)]\\
	&=(\prod_{0\neq v\in V}v)\cdot \prod_{v\in V}(\zeta+v)^{t-1} \cdot \prod_{i=1,i\neq k} (\beta_k-\beta_i) \\
	&=(\prod_{0\neq v\in V}v)\cdot \prod_{v\in V}(\zeta+v)^{t-1} \cdot L_{\mathbf{b}}(\beta_k). \\
	\end{split}\]
 By Lemma \ref{lem3}, there is $\delta \in F_q^*$ such that $\prod_{0\neq v\in V}v=(-1)^{\frac{r^e-1}{2}}\delta^2$. Let
  $$\lambda=(-1)^{\frac{r^e-1}{2}}\delta^2\cdot \prod_{v'\in V}(\zeta+v')^{t-1},$$
  then $\lambda \in F_q^*$ and $L_{\mathbf{a}}(b)=\lambda\cdot L_{\mathbf{b}}(\beta_k)$. It follows that $\eta(\lambda L_{\mathbf{a}}(b))=\eta(L_{\mathbf{b}}(\beta_k))$. This completes the proof.
\end{proof}

\begin{theorem}\label{th1}
Let $q=r^m$ and $q\equiv 1 ({\rm mod}~4)$, where $r$ is an odd prime power and $m$ is a positive integer. Let $e$ be an integer with $0\leq e\leq m-1$, and let $t$ be a positive divisor of $\frac{r-1}2$, and $t\neq \frac{r-1}2$. Then there exists a $q$-ary MDS self-dual code of length $2t r^e$.
\end{theorem}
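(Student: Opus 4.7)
My plan is to construct the evaluation set $\mathbf{a}$ via Lemma~\ref{lem7} and then invoke the GRS self-dual criterion of Lemma~\ref{lem1}. I fix any $F_r$-subspace $V\subseteq F_q$ of dimension $e$ (which exists since $e\le m-1$) and any $\zeta\in F_q\setminus V$, and apply Lemma~\ref{lem7} with its parameter ``$t$'' replaced by $2t$, which is even and satisfies $2t<r-1$ (because $t<(r-1)/2$). The resulting $\mathbf{a}$ has cardinality $2tr^e=n$, and the conclusion of Lemma~\ref{lem7} gives some $\lambda\in F_q^*$ with $\eta(\lambda L_\mathbf{a}(\alpha_{ij}))=\eta(L_\mathbf{b}(\beta_i))$, so the problem reduces to choosing the tuple $\mathbf{b}=(\beta_1,\dots,\beta_{2t})\in F_r^{2t}$ of distinct elements so that $\eta(L_\mathbf{b}(\beta_i))$ is constant in $i$.

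To specify $\mathbf{b}$, I take $\mathbf{b}=(c_1,-c_1,c_2,-c_2,\dots,c_t,-c_t)$ for $t$ distinct nonzero elements $c_1,\dots,c_t\in F_r$ with $c_i\neq -c_j$ whenever $i\neq j$. A direct expansion gives
\[
L_\mathbf{b}(\pm c_k)\;=\;\pm\,2c_k\prod_{i\neq k}(c_k^2-c_i^2),
\]
where $\prod_{i\neq k}(c_k^2-c_i^2)$ is a square in $F_q^*$. Since $q\equiv 1\pmod 4$ gives $\eta(-1)=1$, both signs collapse and $\eta(L_\mathbf{b}(\pm c_k))=\eta(2c_k)$, so it suffices to arrange $\eta(c_1)=\cdots=\eta(c_t)$ in $F_q$.

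The remaining step, and the main place where the hypothesis on $t$ is used, splits into two cases. If $m$ is even, then $(q-1)/(r-1)=1+r+\cdots+r^{m-1}\equiv m\equiv 0\pmod 2$ shows every element of $F_r^*$ is a square in $F_q$, so any $t$ representatives with no $\pm$-collision work (and $(r-1)/2\ge t$ such pairs exist). If $m$ is odd, then $q\equiv 1\pmod 4$ forces $r\equiv 1\pmod 4$, the $F_q$-squares in $F_r^*$ coincide with the $F_r^*$-squares, and this subgroup of size $(r-1)/2$ is closed under negation (because $-1$ is a square in $F_r$), partitioning into $(r-1)/4$ pairs $\{c,-c\}$. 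The assumption that $t$ is a proper divisor of the even number $(r-1)/2$ forces $t\le(r-1)/4$, so at least $t$ such pairs are available; I pick $c_1,\dots,c_t$ as pair representatives. This parity bookkeeping is the one subtle point; once $\mathbf{b}$ is in hand, Lemmas~\ref{lem7} and~\ref{lem1} deliver the desired $[2tr^e,\,tr^e,\,tr^e+1]$ MDS self-dual code.
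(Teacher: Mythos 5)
Your reduction to $F_r$ via Lemma~\ref{lem7} (applied with parameter $2t$), and all of the parity bookkeeping about which elements of $F_r^*$ are squares in $F_q$, are correct; in particular the case $m$ even is fine, since there \emph{every} element of $F_r^*$ is a square in $F_q$ and any $2t$ distinct elements of $F_r$ would do. The gap is in the case $m$ odd (including $q=r$), and it sits exactly in the sentence ``where $\prod_{i\neq k}(c_k^2-c_i^2)$ is a square in $F_q^*$.'' This is asserted, not proved, and it is false in general: the factors $c_k^2-c_i^2$ are arbitrary differences of squares and their product has no reason to be a square. Concretely, take $q=r=41$, $t=4$ (a proper divisor of $\frac{r-1}{2}=20$), and $c=(1,2,4,5)$, which are quadratic residues lying in distinct $\{c,-c\}$ pairs, so your selection rule permits them. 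Then $L_{\mathbf b}(1)=2\cdot(1-4)(1-16)(1-25)\equiv 13$ and $L_{\mathbf b}(2)=4\cdot(4-1)(4-16)(4-25)\equiv 31 \pmod{41}$; here $13$ is a non-residue and $31$ is a residue mod $41$, so $\eta(L_{\mathbf b}(\beta_i))$ is not constant and neither Lemma~\ref{lem1} nor Remark~\ref{rem1} applies to your $\mathbf b$. So the additive symmetry $\mathbf b=-\mathbf b$ together with $\eta(c_1)=\cdots=\eta(c_t)$ does not force the required condition, and the proof does not go through as written.

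What is really needed is control of the auxiliary quantity $\eta\bigl(c_kL_S(c_k^2)\bigr)$ with $S=\{c_1^2,\dots,c_t^2\}$, i.e.\ the squares themselves must again form a ``good'' evaluation set; your choice gives no handle on this (it happens to work for $t\le 2$ and when the pairs are forced to exhaust all of $E_r$, which is why small examples look fine). The paper avoids the problem by giving $\mathbf b$ \emph{multiplicative} rather than additive symmetry: for $t$ odd it takes $\mathbf b$ to be the group of $2t$-th roots of unity in $F_r^*$, so Lemma~\ref{lem4} gives $L_{\mathbf b}(\beta^i)=2t\beta^{-i}$ exactly, with $\beta$ a square because $\frac{q-1}{2t}$ is even; for $t$ even it takes the union of the $t$-th roots of unity and one coset $\zeta\langle\beta\rangle$ with $\zeta$ a nonzero square (this is where $t\neq\frac{r-1}{2}$ is used, to guarantee such a $\zeta$ exists outside $\langle\beta\rangle$), and then $L_{\mathbf b}$ factors in closed form as $\beta^{-i}t(1-\zeta^t)$ resp.\ $\zeta^{t-1}\beta^{-i}t(\zeta^t-1)$. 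You would need to replace your $\pm$-paired set by such a coset structure (or otherwise prove the residuacity of $\prod_{i\ne k}(c_k^2-c_i^2)$ for a specific choice of the $c_i$) to close the argument.
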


\begin{proof}
The proof can be divided into the following two cases.

{\bf Case 1}: $t$ is odd. Then $2t \equiv 2({\rm mod}~4)$. Let $\beta\in F_r^*$ be a primitive $2t$-th root of unity and $\mathbf{b}=\{\beta^i:1\leq i\leq 2t\}$. By Lemma \ref{lem4}, for any $1\leq i\leq 2t$,
\begin{equation}
L_{\mathbf{b}}(\beta^i)=\prod_{1\leq j \leq n,j\neq i}(\beta^i-\beta^j)=2t\beta^{-i}.
\end{equation}
Since $2t\mid (r-1)$ and $r-1\mid (q-1)$, this is a primitive element $\theta$ of $F_q$ such that $\beta=\theta^{\frac{q-1}{2t}}$. Since $q\equiv 1({\rm mod}~4)$ and $2t\equiv 2({\rm mod}~4)$, we have $\frac{q-1}{2t}$ is even. It follows that $\eta(\beta)=1$. By Lemma \ref{lem7}, there is $\lambda_1\in F_q^*$ such that $$\eta(\lambda_1 L_{\mathbf{a}}(\alpha_{ij}))=\eta(L_{\mathbf{b}}(\beta^i)).$$
Let $\lambda=2t\lambda_1$, then
\[\begin{split}
	\eta(\lambda L_{\mathbf{a}}(\alpha_{ij}))&=\eta(2t\cdot L_{\mathbf{b}}(\beta^i))\\
	&=\eta((2t)^2 \beta^{-i})\\
	&=\eta(\beta^{-i})=1.
\end{split}\]
By Lemma \ref{lem1}, the code $GRS_k(\mathbf{a},\mathbf{v})$ defined in (\ref{eq1}) is an MDS self-dual code. The desired result follows.

{\bf Case 2}: $t$ is even. Then $2t\equiv 0({\rm mod}~4)$. Let $\beta\in F_r^*$ be a primitive $t$-th root of unity. It is easy to check that $\eta(\beta)=1$. Let $E_r$ denote the set of nonzero squares of $ F_r^*$. Note that $t<\frac{r-1}2$, there is $\zeta\in{E_r}\backslash \{\beta^i:1\leq i\leq t\}$. Let
$$\mathbf{b}=(\beta,\beta^2,\dots,\beta^t,\zeta\beta,\zeta\beta^2,\dots,\zeta\beta^t).$$
For any $\delta \in F_r$, it is clear that $x^t-\delta^t=\prod_{1\leq j\leq t}(x-\delta\beta^j)$.
Then
\begin{equation}
L_\mathbf{b}(\beta^i)=\prod_{1\leq j \leq t, j\neq i}(\beta^i-\beta^j) \prod_{1\leq j\leq t}(\beta^i-\zeta \beta^j)=\beta^{-i}t(1-\zeta^t)
\end{equation}
and
\begin{equation}
L_\mathbf{b}(\zeta\beta^i)=\prod_{1\leq j \leq t}(\zeta\beta^i-\beta^j)\prod_{1\leq j\leq t,i\neq j}(\zeta\beta^i-\zeta\beta^j)=\zeta^{t-1}\beta^{-i}t(\zeta^t-1).
\end{equation}
By Lemma \ref{lem7}, there is $\lambda_1\in F_q^*$ such that $\eta(\lambda_1 L_{\mathbf{a}}(\alpha_{ij}))=\eta(L_{\mathbf{b}}(\beta^i))$ and $$\eta(\lambda_1 L_{\mathbf{a}}(\alpha_{t+i, j}))=\eta(L_{\mathbf{b}}(\zeta\beta^i)).$$
It is clear that $\eta(-1)=\eta(\zeta)=\eta(\beta)=1$. Let $\lambda=t(\zeta^t-1)\lambda_1$, then
\[ \begin{split}
 	\eta(\lambda L_{\mathbf{a}}(\alpha_{ij}))&=\eta(t(\zeta^t-1)\cdot L_{\mathbf{b}}(\beta^i))\\
 	&=\eta(\beta^{-i}t^2(1-\zeta^t)^2)\\
 	&=1
 \end{split}\]
 and
\[ \begin{split}
 	\eta(\lambda L_{\mathbf{a}}(\alpha_{t+i, j}))&=\eta(t(\zeta^t-1)\cdot L_{\mathbf{b}}(\zeta \beta^i))\\
 	&=\eta(\zeta^{t-1}\beta^{-i}t^2 (\zeta^t-1)^2)\\
 	&=1.
 \end{split}\]
By Lemma \ref{lem1}, the code $GRS_k(\mathbf{a},\mathbf{v})$ defined in (\ref{eq1}) is an MDS self-dual code. The desired result follows.
\end{proof}

\begin{remark}
Let $r$ be an odd prime power, and $q=r^m$. The following hold.
\begin{itemize}
\item In \cite{RefJ14}, the authors proved that, if $2t\mid (r-1)$ and $\frac{q-1}{2t}$ is even, there exists a $q$-ary MDS self-dual code of length $n=2t p^e$, where $0\leq e\leq m-1$.
\item In Theorem \ref{th1}, we proved that, if $2t\mid(r-1)$ and $q \equiv 1({\rm mod}~4)$, there exists a $q$-ary MDS self-dual code of length $n=2t r^e$, where $0\leq e\leq m-1$.
\end{itemize}
This shows that Theorem \ref{th1} extends the existence of an MDS self-dual codes.
\end{remark}

\begin{example}
Let $r=9$, $q=9^2=81$, $t=2$, we have $q\equiv 1({\rm mod}~4)$ and $2t\mid r-1$. By Theorem \ref{th1}, there exists an MDS self-dual code of length $n=2tr=36$.
\end{example}

\begin{theorem}\label{th2}
Let $q=p^m$ and $q\equiv 1({\rm mod}~4)$, where $p$ is an odd prime and $m$ is a positive integer. Let $t$ be an odd integer with $2\leq t \leq p-1$, and let $0\leq e \leq m-1$. If $\eta(i(t+1-i))=1$ for $1\leq i\leq \frac{t-1}2$. Then there exists a $q$-ary MDS self-dual code of length $(t+1)p^e$.
\end{theorem}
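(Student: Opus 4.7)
The target length $(t+1)p^{e}$ is even because $t$ is odd, so I would aim to realise the code as an MDS self-dual GRS code and invoke Lemma \ref{lem1}. The evaluation set is built via Lemma \ref{lem7} with $r=p$ and with its $t$-tuple replaced by the $(t+1)$-tuple $\mathbf{b}=(0,1,2,\dots,t)$ of distinct elements of $F_{p}$ (valid because $t\le p-1$), together with any $e$-dimensional $F_{p}$-subspace $V\subset F_{q}$ and any $\zeta\in F_{q}\setminus V$. This produces $\mathbf{a}$ indexed by the $(t+1)p^{e}$ points of $W=\bigcup_{k=0}^{t}(k\zeta+V)$, and supplies $\lambda_{1}\in F_{q}^{*}$ with $\eta(\lambda_{1}L_{\mathbf{a}}(\alpha_{ij}))=\eta(L_{\mathbf{b}}(\beta_{i}))$. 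Hence it suffices to prove that $\eta(L_{\mathbf{b}}(k))$ is independent of $k\in\{0,1,\dots,t\}$, since the common value can then be absorbed into $\lambda$ and Lemma \ref{lem1} finishes the job.

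A direct expansion, splitting the product at $j=k$, gives $L_{\mathbf{b}}(k)=(-1)^{t-k}k!(t-k)!$ for $0\le k\le t$, and therefore
\[
\frac{L_{\mathbf{b}}(k+1)}{L_{\mathbf{b}}(k)}=-\frac{k+1}{t-k}\qquad(0\le k\le t-1).
\]
Using $q\equiv 1\pmod{4}$, so that $\eta(-1)=1$, together with $\eta(1/x)=\eta(x)$, the equality of $\eta(L_{\mathbf{b}}(k+1))$ and $\eta(L_{\mathbf{b}}(k))$ reduces to $\eta((k+1)(t-k))=1$ for every $0\le k\le t-1$; setting $i=k+1$, this is precisely $\eta(i(t+1-i))=1$ for all $1\le i\le t$.

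The hypothesis only grants this on the restricted range $1\le i\le(t-1)/2$, and the main (and essentially only) obstacle is to see that it propagates to the full range automatically. The key observation is the involution $i\mapsto t+1-i$: it preserves the value $i(t+1-i)$, so the hypothesis transports to $i\in\{(t+3)/2,\dots,t\}$, and at the fixed point $i=(t+1)/2$ (an integer because $t$ is odd) one has $i(t+1-i)=i^{2}$, whose character is trivially $1$. Consequently $\eta(L_{\mathbf{b}})$ is constant on $\{0,\dots,t\}$; choosing $\lambda=\lambda_{1}\cdot L_{\mathbf{b}}(0)$ yields $\eta(\lambda L_{\mathbf{a}}(\alpha_{ij}))=1$ uniformly, and Lemma \ref{lem1} produces the desired MDS self-dual GRS code of length $(t+1)p^{e}$. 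Apart from this symmetry step, everything mirrors the computations already carried out for Theorem \ref{th1}.
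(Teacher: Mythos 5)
Your proposal is correct and follows essentially the same route as the paper: the same evaluation set $\{0,1,\dots,t\}\subseteq F_p$, the same lift to length $(t+1)p^e$ via Lemma \ref{lem7}, the same use of $\eta(-1)=1$ from $q\equiv 1\ ({\rm mod}~4)$, and the conclusion via Lemma \ref{lem1}. The only (cosmetic) difference is that you verify the constancy of $\eta(L_{\mathbf{b}}(k))$ through consecutive ratios plus the involution $i\mapsto t+1-i$, whereas the paper expands each $\eta(L_{\mathbf{a}}(i))$ directly and invokes the symmetry $\eta(L_{\mathbf{a}}(i))=\eta(L_{\mathbf{a}}(t-i))$.
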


\begin{proof}
We choose $\mathbf{a}=\{0,1,\dots,t\}\subseteq F_p$. By Equation (\ref{EQ3}),
\begin{align*}
	L_{\mathbf{a}}(i)&=\prod_{0\leq j\leq t,  j\neq i}(i-j), \\
	&=(-1)^t\prod_{0\leq j\leq t,  j\neq i}(j-i)\\
	&=-\prod_{0\leq j\leq t,  j\neq i}(j-i),
\end{align*}
for $i\in \{0,1, \dots, t\}$. It follows that
\begin{align*}
\eta(L_{\mathbf{a}}(i))=
 \eta((-1)^{i-1}i!(t-i)!).
\end{align*}
If $0\leq i\leq \frac{t-3}2$, then
\begin{align*}
\eta(L_{\mathbf{a}}(i))&=\eta((-1)^{i-1}(i+1)(i+2)\cdots(t-i))\\	
&=[\eta(-1)]^{i-1}\prod_{j=i+1}^{\frac{t-1}2} \eta(j(t+1-j))\eta(\frac{t+1}{2}).
\end{align*}
If $ i= \frac{t-1}2$, then
\begin{align*}
\eta(L_{\mathbf{a}}(i))&=[\eta(-1)]^{i-1}\eta(\frac{t+1}{2}).
\end{align*}
Since $q\equiv 1({\rm mod}~4)$, we have $\eta(-1)=1$. By assumption, $\eta(i (t+1-i))=1$ for all $1\leq i \leq \frac{t-1}2$.
From symmetry, we have $\eta(L_{\mathbf{a}}(i))=\eta(L_{\mathbf{a}}(t-i))$, for $0\leq i\leq \frac{t-1}2$.
It follows that $\eta(L_{\mathbf{a}}(\alpha))$ are same for all $\alpha\in \mathbf{a}$. By Remark \ref{rem1} and Lemma \ref{lem7}, there exists a $q$-ary MDS self-dual code of length $(t+1)p^e$. This completes the proof.
\end{proof}

\begin{example}
Let $p=13$, $q=p^2=169$ and $t=3$,
we have $q\equiv 1({\rm mod}~4)$, $q\equiv 1({\rm mod}~8)$ and $q\equiv 1({\rm mod}~12)$,
therefore $\eta(-1)=\eta(3)=\eta(2)=1$.
By Theorem \ref{th2}, there exists a $q$-ary MDS self-dual code of length $n=(t+1)p=52$.
\end{example}

\begin{lemma}(\cite{RefJ9})\label{lem8}
Let $q=r^m$, where $r$ is an odd prime power and $m$ is a positive integer. Let $V$ be an $F_r$-subspace of $F_q$ of dimension $e$, where $0\leq e<m$. Suppose $\mathbf{a}=(\alpha_1,\alpha_2,\dots,\alpha_t)\in  F_r^t$, $t$ is odd, and $\mathbf{v}\in (F_q^*)^t$ such that $GRS_{\frac{t+1}{2}}(\mathbf{a},\mathbf{v},\infty)$ is self-dual. If $q\equiv 1({\rm mod}~4)$ or $e$ even, then there exists a $q$-ary MDS self-dual code of length $tr^e+1$.
\end{lemma}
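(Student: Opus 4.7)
The plan is to mimic the coset construction of Lemma \ref{lem7}, lifting the hypothesized self-dual extended GRS code on $\mathbf{a}$ (length $t+1$) to one of length $tr^e+1$. First, pick any $\zeta\in F_q\setminus V$ (available since $e<m$), set $\alpha_{ij}=\alpha_i\zeta+v_j$ for $1\le i\le t$ and $1\le j\le r^e$, and let $\mathbf{a}'=(\alpha_{ij})$. These $tr^e$ elements are pairwise distinct because the $\alpha_i\in F_r$ are distinct and, as $V$ is $F_r$-stable with $\zeta\notin V$, the difference $(\alpha_i-\alpha_{i'})\zeta$ never lies in $V$ for $i\ne i'$.

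Next, I would repeat the $L$-value calculation from the proof of Lemma \ref{lem7} verbatim (with $\alpha_i$ in the role of $\beta_i$) to obtain
\[
L_{\mathbf{a}'}(\alpha_{kj})=\Bigl(\prod_{0\ne v\in V}v\Bigr)\cdot f_V(\zeta)^{\,t-1}\cdot L_{\mathbf{a}}(\alpha_k).
\]
Lemma \ref{lem3} expresses $\prod_{0\ne v\in V}v=(-1)^{(r^e-1)/2}\delta^2$ for some $\delta\in F_q^*$, while $f_V(\zeta)^{\,t-1}$ is a square because $t-1$ is even. The self-duality hypothesis for $GRS_{(t+1)/2}(\mathbf{a},\mathbf{v},\infty)$ pins down $v_k^{2}L_{\mathbf{a}}(\alpha_k)=-1$ (the unique normalization consistent with the extended-GRS duality formula), so $\eta(-L_{\mathbf{a}}(\alpha_k))=1$ for every $k$. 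Combining these ingredients,
\[
\eta\bigl(-L_{\mathbf{a}'}(\alpha_{kj})\bigr)=\eta\bigl((-1)^{(r^e+3)/2}\bigr).
\]

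Finally, the two alternative hypotheses are precisely what is needed to make this sign trivial: if $q\equiv 1\pmod 4$ then $\eta(-1)=1$; if $e$ is even then $r^e\equiv 1\pmod 4$ (since $r$ is odd), so $(r^e+3)/2$ is even. In either case the displayed quantity equals $1$, and Lemma \ref{lem2} applied to $\mathbf{a}'$ (with $(v'_{ij})^2=-L_{\mathbf{a}'}(\alpha_{ij})^{-1}$) delivers the required MDS self-dual extended GRS code of length $tr^e+1$.

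The main subtlety I anticipate is the step extracting $\eta(-L_{\mathbf{a}}(\alpha_k))=1$ from the self-duality hypothesis itself rather than from the sufficient criterion of Lemma \ref{lem2}. A short check using the explicit description of the dual of an extended GRS code---which forces $v_k^{2}L_{\mathbf{a}}(\alpha_k)$ to be a fixed constant, determined by the $\infty$-coordinate to be $-1$---closes this gap. Everything else is a direct transcription of the Lemma \ref{lem7} mechanism together with the same parity book-keeping used throughout Section 3.
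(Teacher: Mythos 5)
The paper states this lemma as an imported result from \cite{RefJ9} and gives no proof of its own, so there is nothing internal to compare against; judged on its merits, your argument is correct and is exactly the mechanism one would expect (and that \cite{RefJ9} uses): the subspace-translate computation of Lemma \ref{lem7} reduces $\eta(-L_{\mathbf{a}'}(\alpha_{kj}))$ to $\eta\bigl((-1)^{(r^e-1)/2}\bigr)\eta(-L_{\mathbf{a}}(\alpha_k))$, and the converse direction of the extended-GRS self-duality criterion (Lemma 2.2 of \cite{RefJ9}, which this paper itself invokes in the proof of Lemma \ref{lem10}) legitimately supplies $\eta(-L_{\mathbf{a}}(\alpha_k))=1$. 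Your parity bookkeeping for the two alternative hypotheses is also right, since $(r^e+3)/2$ and $(r^e-1)/2$ have the same parity and $e$ even makes $r^e\equiv 1\ ({\rm mod}\ 4)$.
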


\begin{theorem}\label{th3}
Let $q=p^m$ and $q\equiv 1({\rm mod}~4)$, where $p$ is an odd prime and $m$ is a positive integer. Let $t$ be an even integer such that $2\leq t \leq p-1$ and let $0\leq e\leq m-1$. If $\eta(i (t+1-i))=1$ for $1\leq i\leq \frac{t}2$. Then there exists a $q$-ary MDS self-dual code of length $(t+1)p^e+1$.
\end{theorem}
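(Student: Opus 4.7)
The plan is to mirror the structure of Theorem \ref{th2}, but now apply Lemma \ref{lem2} (the extended GRS criterion) instead of Remark \ref{rem1}, and then use Lemma \ref{lem8} to lift the construction from $F_p$ to $F_q$. Since $t$ is even, $t+1$ is odd, so an evaluation set of $t+1$ consecutive integers naturally fits Lemma \ref{lem8}, whose lifting recipe multiplies the length by $p^e$ and then appends $\infty$, producing exactly $(t+1)p^e+1$.

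First I would take $\mathbf{a}=(0,1,\dots,t)\subseteq F_p$, so $|\mathbf{a}|=t+1$ is odd. A direct computation gives
\[
L_{\mathbf{a}}(i)=\prod_{\substack{0\leq j\leq t\\ j\neq i}}(i-j)=(-1)^{t-i}\,i!\,(t-i)!,
\]
so since $t$ is even and $\eta(-1)=1$ (because $q\equiv 1\pmod 4$),
\[
\eta(-L_{\mathbf{a}}(i))=\eta(i!\,(t-i)!).
\]
I then need to check this quadratic character equals $1$ for every $0\leq i\leq t$. For $i=t/2$ the value $((t/2)!)^2$ is a perfect square so the claim is trivial. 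For $i<t/2$ I would write
\[
i!(t-i)!=(i!)^2\prod_{j=i+1}^{t-i}j,
\]
and then pair the factors $j$ and $t+1-j$ in the remaining product, obtaining
\[
\prod_{j=i+1}^{t-i}j=\prod_{j=i+1}^{t/2}j(t+1-j),
\]
so that $\eta(i!(t-i)!)=\prod_{j=i+1}^{t/2}\eta(j(t+1-j))=1$ by hypothesis. The case $i>t/2$ follows by the symmetry $L_{\mathbf{a}}(t-i)=(-1)^t L_{\mathbf{a}}(i)$ together with $\eta(-1)=1$.

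At this point Lemma \ref{lem2} produces an MDS self-dual extended GRS code of length $t+2$ over $F_q$ with evaluation set $\mathbf{a}\cup\{\infty\}$ and evaluation points drawn from $F_p$. Finally I would invoke Lemma \ref{lem8}, applied with its parameter equal to our $t+1$ (odd) and $r=p$, together with the hypothesis $q\equiv 1\pmod 4$, to extend this to an MDS self-dual code of length $(t+1)p^e+1$ over $F_q$, which is the desired conclusion.

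The only real obstacle is the character computation in the middle step, and the pairing trick $j\leftrightarrow t+1-j$ reduces it immediately to the stated hypothesis $\eta(i(t+1-i))=1$ for $1\leq i\leq t/2$; the rest is bookkeeping. No new ideas beyond those used in Theorem \ref{th2} are required, only the switch to the extended-GRS criterion (which needs $\eta(-L_{\mathbf{a}}(i))=1$ rather than $\eta(L_{\mathbf{a}}(i))$ being constant) and the application of Lemma \ref{lem8} to get the factor $p^e$.
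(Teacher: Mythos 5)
Your proposal is correct and follows essentially the same route as the paper: evaluate at $\mathbf{a}=\{0,1,\dots,t\}\subseteq F_p$, compute $\eta(-L_{\mathbf{a}}(i))$ via the pairing $j\leftrightarrow t+1-j$ to reduce to the hypothesis $\eta(i(t+1-i))=1$, then apply Lemma \ref{lem2} followed by Lemma \ref{lem8} to obtain length $(t+1)p^e+1$. The only cosmetic difference is that you make the factorization $i!(t-i)!=(i!)^2\prod_{j=i+1}^{t-i}j$ and the even-$t$ symmetry explicit, which the paper leaves more terse.
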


\begin{proof}
We choose $\mathbf{a}=\{0,1,\dots,t\}\subseteq F_p$. By Theorem \ref{th2},
\begin{align*}
\eta(L_{\mathbf{a}}(i))=
 \eta((-1)^{i}i!(t-i)!).
\end{align*}
If $0\leq i \leq \frac{t-2}2$, then
\begin{align*}
\eta(L_{\mathbf{a}}(i))&=\eta((-1)^{i}(i+1)(i+2)\cdots(t-i))\\	
&=[\eta(-1)]^{i}[\prod_{j=i+1}^{\frac{t}2} \eta(j(t+1-j))].
\end{align*}
If $i=\frac{t}2$, then
\begin{align*}
\eta(L_{\mathbf{a}}(i))&=\eta((-1)^{i})=[\eta(-1)]^{i}.
\end{align*}
Since $q\equiv 1({\rm mod}~4)$, we have $\eta(-1)=1$. By assumption, $\eta(i (t+1-i))=1$ for all $1\leq i \leq \frac{t}2$.
From symmetry, we have $\eta(L_{\mathbf{a}}(i))=\eta(L_{\mathbf{a}}(t-i))$, for $0\leq i\leq \frac{t}2$.
It follows that $\eta(-L_{\mathbf{a}}(\alpha))=1$ for all $\alpha\in \mathbf{a}$. By Lemma \ref{lem2} and Lemma \ref{lem8}, there exists a $q$-ary MDS self-dual code of length $(t+1)p^e+1$. This completes the proof.
\end{proof}

\begin{example}
Let $p=13$, $q=p^2=169$ and $t=2$,
we have $q\equiv 1({\rm mod}~4)$ and $q\equiv 1({\rm mod}~8)$,
therefore $\eta(-1)=\eta(2)=1$.
By Theorem \ref{th3},
there exists an MDS self-dual code of length $n=(t+1)p+1=40$.
\end{example}

\begin{theorem}\label{th4}
Let $q=r^m$, where $r$ is an odd prime power and $m\geq 1$. Let $t$ be an even integer and $t\mid r-1$. If $\eta(t)=\eta(-1)=1$, or $\eta(-t)=1$ and $e$ is even, then there exists a $q$-ary MDS self-dual code with length $(t+1)r^e+1$, where $0\leq e\leq m-1$.
\end{theorem}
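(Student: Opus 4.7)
The plan is to reduce Theorem \ref{th4} to Lemma \ref{lem8} by constructing a base $q$-ary MDS self-dual extended GRS code of length $t+2$ whose $t+1$ finite evaluation points lie in $F_r$. Because $t\mid r-1$, I can pick $\beta\in F_r^*$ a primitive $t$-th root of unity and set
$$\mathbf{a}'=(0,\beta,\beta^2,\dots,\beta^t)\in F_r^{t+1}.$$
In the notation of Lemma \ref{lem8} the relevant exponent is $s:=t+1$, which is odd because $t$ is even. A base extended GRS code $GRS_{(s+1)/2}(\mathbf{a}',\mathbf{v}',\infty)$ of length $s+1=t+2$ over $F_q$ will then, via Lemma \ref{lem8}, produce the desired $q$-ary MDS self-dual code of length $sr^e+1=(t+1)r^e+1$.

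Next I compute $L_{\mathbf{a}'}(\alpha)$ for each $\alpha\in\mathbf{a}'$. For $\alpha=\beta^i$ with $1\le i\le t$, Lemma \ref{lem4} gives
$$L_{\mathbf{a}'}(\beta^i)=\beta^i\prod_{1\le j\le t,\,j\neq i}(\beta^i-\beta^j)=\beta^i\cdot t\beta^{-i}=t.$$
For $\alpha=0$,
$$L_{\mathbf{a}'}(0)=\prod_{i=1}^{t}(-\beta^i)=(-1)^{t}\beta^{t(t+1)/2}.$$
Since $t$ is even, $(-1)^t=1$; writing $t(t+1)/2=(t/2)(t+1)$ and using $\beta^t=1$ with $\beta^{t/2}=-1$, this reduces to $(-1)^{t+1}=-1$. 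Hence $L_{\mathbf{a}'}(0)=-1$.

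Now I check the self-duality criterion of Lemma \ref{lem2}, namely $\eta\bigl(-L_{\mathbf{a}'}(\alpha)\bigr)=1$ for every $\alpha\in\mathbf{a}'$. At $\alpha=0$ this is $\eta(1)=1$; at $\alpha=\beta^i$ it becomes $\eta(-t)$. In Case 1 ($\eta(t)=\eta(-1)=1$), $\eta(-t)=\eta(-1)\eta(t)=1$. In Case 2, $\eta(-t)=1$ by assumption. Thus in both cases the base extended GRS code $GRS_{(t+2)/2}(\mathbf{a}',\mathbf{v}',\infty)$ is MDS self-dual, with $(v'_i)^2=(-L_{\mathbf{a}'}(\alpha_i))^{-1}$.

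Finally I invoke Lemma \ref{lem8} with parameter $s=t+1$ odd, noting that its side hypothesis ``$q\equiv 1\pmod 4$ or $e$ even'' is met: in Case 1, $\eta(-1)=1$ forces $q\equiv 1\pmod 4$; in Case 2, $e$ is even by assumption. This yields a $q$-ary MDS self-dual code of length $(t+1)r^e+1$ as required. The only genuinely delicate step is the exponent reduction in $L_{\mathbf{a}'}(0)$, which pins down the sign and crucially uses that $t$ is even; everything else is bookkeeping to align our parameter $t$ with the odd parameter $s=t+1$ in Lemma \ref{lem8}.
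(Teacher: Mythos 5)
Your proof is correct and follows essentially the same route as the paper: the same evaluation set $(0,\beta,\beta^2,\dots,\beta^t)$ with $\beta$ a primitive $t$-th root of unity in $F_r$, the same computations $L_{\mathbf{a}'}(\beta^i)=t$ and $L_{\mathbf{a}'}(0)=-1$, and the same reduction via Lemma \ref{lem2} and Lemma \ref{lem8}. You even make explicit two points the paper leaves implicit, namely that $\eta(-t)=1$ in both hypothesis cases and that the side condition of Lemma \ref{lem8} ($q\equiv 1\pmod 4$ or $e$ even) is met in each case.
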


\begin{proof}
Let $\beta\in  F_r^*$ be a primitive $t$-th root of unity and $\mathbf{b}=(0,\beta,\beta^2,\dots,\beta^t)$. By Lemma \ref{lem4}, then for $1\leq i \leq t$,
\[\begin{split}
	L_{\mathbf{b}}(\beta^i)&=\beta^i \prod_{1\leq j\leq t, j\neq i}(\beta^i-\beta^j)=\beta^{it}t=t\\
	\end{split}\]
and
\[\begin{split}
	L_{\mathbf{b}}(0)&=\prod_{1\leq j\leq t}(0-\beta^j)=-1.\\
	\end{split}\]
Since $\eta(-t)=1$, $GRS_{\frac{t+2}{2}}(\mathbf{b},\mathbf{v},\infty)$ is self-dual.
By Lemma \ref{lem8}, if $q\equiv 1({\rm mod}~4)$ or $e$ even, there exists a $q$-ary MDS self-dual code of length $(t+1)r^e+1$. This completes the proof.
\end{proof}

\begin{remark}
Let $r$ be an odd prime power, and $q=r^m$. The following hold.
\begin{itemize}
\item In \cite{RefJ9}, the authors proved that, if $t\mid (r-1)$ and $\eta(t)=\eta(-1)=1$, there exists a $q$-ary MDS self-dual code of length $n=(t+1)r^e+1$, where $0\leq e\leq m-1$.
\item In Theorem \ref{th4}, we proved that, if $t\mid (r-1)$, $\eta(-t)=1$ and $e$ is even, there exists a $q$-ary MDS self-dual code of length $n=(t+1)r^e+1$, where $0\leq e\leq m-1$.
\end{itemize}
This shows that Theorem \ref{th4} extends the existence of an MDS self-dual codes.
\end{remark}

\begin{example}
Let $r=11$, $q=r^3=1331$ and $t=2$,
we have $t\mid r-1$, and $\eta(-2)=1$.
By Theorem \ref{th4},
when $e=2$,
there exists an MDS self-dual code of length $n=(t+1)r^e+1=3\times 121+1=364$.
\end{example}

Now we consider the union of cosets from multiplicative subgroup of $F_q^*$.\\
For brevity,\\
 $\cdot\ $ Let $\theta$ be a primitive element of $F_q^*$.\\
 $\cdot\ $ $q-1=e_1e_2$.\\
 $\cdot\ $ $H_1=\langle\theta^{e_1}\rangle$, $H_2=\langle\theta^{e_2}\rangle$.\\
 $\cdot\ $ $v(\alpha)=min\{x\in N| \alpha=\theta^{xe_1}\}$ for $\alpha\in H_1$.

\begin{lemma}(\cite{RefJ9})\label{lem9}
Let $t$ be even. Suppose $GRS_{\frac{t}{2}}(\mathbf{a},\mathbf{v})$ is self-dual for some $\mathbf{a}=\{\alpha_1,\alpha_2,\dots, \alpha_t \}\subseteq H_1$ and $\mathbf{v}\in (F_q^*)^{t}$. If $e_1$ is odd, then there exists a $q$-ary MDS self-dual code of length $t e_1$.
\end{lemma}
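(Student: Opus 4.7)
The plan is to enlarge the evaluation set by taking $\mathbf{A}=\bigcup_{i=1}^{t}\alpha_i H_2$, which will have size $te_1$ once one confirms the cosets $\alpha_i H_2$ are distinct (this is the natural regime since $\mathbf{a}\subseteq H_1$ and the cosets of $H_2$ in $F_q^*$ are parameterized by the quotient $F_q^*/H_2$). The objective is to produce $\lambda'\in F_q^*$ with $\eta(\lambda' L_{\mathbf{A}}(b))=1$ for every $b\in\mathbf{A}$, so that Lemma \ref{lem1} yields a self-dual GRS code of length $te_1$.

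First I would compute $L_{\mathbf{A}}(b)$ for $b\in\alpha_i H_2$. Applying Lemma \ref{lem5}(1) to the partition $\mathbf{A}=\alpha_i H_2\cup(\mathbf{A}\setminus\alpha_i H_2)$ and then Lemma \ref{lem5}(2) with $f=e_1$, together with the observation $b^{e_1}=\alpha_i^{e_1}$ for $b\in\alpha_i H_2$, I obtain
$$L_{\mathbf{A}}(b)=e_1\,b^{e_1-1}\prod_{k\neq i}\bigl(\alpha_i^{e_1}-\alpha_k^{e_1}\bigr).$$
The hypothesis that $e_1$ is odd enters in two crucial places when passing to the quadratic character. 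First, $\eta(b^{e_1-1})=1$ because $e_1-1$ is even. Second, every $\omega\in H_2$ satisfies $\eta(\omega)^{e_1}=\eta(\omega^{e_1})=\eta(1)=1$, which together with $\eta(\omega)\in\{\pm 1\}$ forces $\eta(\omega)=1$; in other words, $H_2$ lies in the kernel of $\eta$. It follows that $\eta(L_{\mathbf{A}}(b))$ depends only on the coset index $i$ and not on the particular $b\in\alpha_i H_2$.

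The remaining and main obstacle is to relate $\eta\bigl(\prod_{k\neq i}(\alpha_i^{e_1}-\alpha_k^{e_1})\bigr)$ to $\eta(L_{\mathbf{a}}(\alpha_i))$, which by the self-duality hypothesis has a value independent of $i$. I would exploit the factorization $\alpha_i^{e_1}-\alpha_k^{e_1}=\prod_{\omega\in H_2}(\alpha_i-\omega\alpha_k)$, combined with the triviality of $\eta$ on $H_2$ shown above, to reorganize the double product and conclude that the two quadratic characters agree up to a factor independent of $i$. Granting this, the hypothesis supplies via Lemma \ref{lem1} an element $\lambda\in F_q^*$ with $\eta(\lambda L_{\mathbf{a}}(\alpha_i))=1$ for all $i$; absorbing the $i$-independent discrepancy into $\lambda$ yields the desired $\lambda'$, and Lemma \ref{lem1} then produces the length-$te_1$ MDS self-dual code.
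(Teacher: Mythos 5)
Your overall strategy (lift each evaluation point to a full coset of $H_2$, use Lemma \ref{lem5} to factor $L$ over the union of cosets, and use that $e_1$ odd forces $\eta$ to be trivial on $H_2$ and $\eta(b^{e_1-1})=1$) is the right one and matches the paper's, but your specific choice of cosets $\mathbf{A}=\bigcup_i\alpha_iH_2$ creates two problems, one of which you flag and then wave away, and one of which you explicitly leave unproved. First, the cosets $\alpha_iH_2$ need not be distinct: $\alpha_iH_2=\alpha_jH_2$ iff $\alpha_i\alpha_j^{-1}\in H_1\cap H_2=\langle\theta^{\mathrm{lcm}(e_1,e_2)}\rangle$, which is nontrivial whenever $\gcd(e_1,e_2)>1$. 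For instance with $q=37$, $e_1=3$, $e_2=12$ one has $H_2=\langle\theta^{12}\rangle\subseteq H_1=\langle\theta^{3}\rangle$, so $\alpha_1=1$ and $\alpha_2=\theta^{12}$ are distinct elements of $H_1$ lying in the same coset of $H_2$, and your $\mathbf{A}$ has fewer than $te_1$ elements. This situation does occur in the paper's applications (e.g.\ in Theorem \ref{th8}, $\gcd(e_1,e_2)=\gcd(m,r^s-1)$ can exceed $1$). Second, because $b^{e_1}=\alpha_i^{e_1}$ rather than $\alpha_i$ for $b\in\alpha_iH_2$, your computation produces $\prod_{k\neq i}(\alpha_i^{e_1}-\alpha_k^{e_1})=L_{\mathbf{a}''}(\alpha_i^{e_1})$ for the \emph{powered} set $\mathbf{a}''=\{\alpha_k^{e_1}\}$, and the hypothesis only controls $\eta(L_{\mathbf{a}}(\alpha_i))$, not $\eta(L_{\mathbf{a}''}(\alpha_i^{e_1}))$. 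Relating the two is exactly what you call the ``main obstacle,'' and you do not resolve it: the proposed reorganization of $\prod_{k\neq i}\prod_{\omega\in H_2}(\alpha_i-\omega\alpha_k)$ leaves a residual factor $\prod_{k\neq i}\prod_{\omega\neq 1}(\alpha_i-\omega\alpha_k)$ whose quadratic character has no evident reason to be independent of $i$; triviality of $\eta$ on $H_2$ does not help because the arguments $\alpha_i-\omega\alpha_k$ are not elements of $H_2$. A proof that ends with ``granting this'' at precisely this point is not a proof.

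Both problems disappear with the paper's choice of cosets: take representatives $\gamma_i=\theta^{v(\alpha_i)}$ with $\gamma_i^{e_1}=\alpha_i$ and $0\leq v(\alpha_i)<e_2$, and set $W=\bigcup_i\gamma_iH_2$. Since the $v(\alpha_i)$ are distinct integers in $\{0,\dots,e_2-1\}$ and $H_2$ has index $e_2$, the cosets are automatically distinct; and since $b^{e_1}=\alpha_k$ exactly for $b\in\gamma_kH_2$, Lemma \ref{lem5} gives $L_W(b)=e_1b^{e_1-1}\prod_{j\neq k}(\alpha_k-\alpha_j)=e_1\theta^{v(\alpha_k)(e_1-1)}\theta^{-e_2u}L_{\mathbf{a}}(\alpha_k)$, so that (using $e_1-1$ and $e_2$ even) $\eta(L_W(b))=\eta(e_1)\eta(L_{\mathbf{a}}(\alpha_k))$ is constant and Remark \ref{rem1} applies, with no need to compare $\mathbf{a}$ with $\mathbf{a}^{e_1}$. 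In short: lift $\alpha_i$ to an $e_1$-th root before forming the coset, rather than forming the coset of $\alpha_i$ itself.
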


\begin{theorem}\label{th8}
Let $q=r^{s m}$ and $q\equiv 1({\rm mod}~4)$, where $r$ is an odd prime power, $s\geq 1$ and $m$ is odd. Let $t$ be an even integer such that $t\mid (r-1)$ and $1<t<r-1$, then there exists a $q$-ary MDS self-dual code of length $tr^e(1+r^s+\dots+r^{s(m-1)})$, where $0\leq e \leq s-1$.
\end{theorem}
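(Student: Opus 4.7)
The plan is to chain Lemmas~\ref{lem7} and~\ref{lem9}: first build a small MDS self-dual GRS code of length $tr^e$ supported inside $F_{r^s}^*$, then use Lemma~\ref{lem9} to inflate it to the target length. To line up with Lemma~\ref{lem9}'s notation, set $e_2 = r^s - 1$ and $e_1 = (q-1)/e_2 = 1 + r^s + \dots + r^{s(m-1)}$, so that $q-1 = e_1 e_2$ and $H_1 = \langle\theta^{e_1}\rangle$ is the unique subgroup of $F_q^*$ of order $e_2 = r^s-1$, i.e., $H_1 = F_{r^s}^*$. Since $m$ is odd, $e_1$ is a sum of $m$ odd integers and is therefore odd, which is precisely the hypothesis required in Lemma~\ref{lem9}. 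Thus, once I produce an MDS self-dual code $GRS_{tr^e/2}(\mathbf{a},\mathbf{v})$ with $\mathbf{a}\subseteq F_{r^s}^*$ and $|\mathbf{a}|=tr^e$, Lemma~\ref{lem9} immediately yields length $tr^e \cdot e_1$, which is the target.

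To produce such $\mathbf{a}$, I apply Lemma~\ref{lem7} with $V$ chosen to be an $F_r$-subspace of $F_{r^s}$ of dimension $e$ (possible since $e \leq s-1$) and some $\zeta_0 \in F_{r^s} \setminus V$. The lemma reduces the problem to finding a $t$-tuple $\mathbf{b} \subseteq F_r^*$ on which $\eta(L_\mathbf{b})$ is constant; the resulting $\mathbf{a} = \{b\zeta_0 + v : b\in\mathbf{b},\, v\in V\}$ will lie in $F_{r^s}^*$ because $b \neq 0$ together with $\zeta_0 \notin V$ rules out $b\zeta_0 + v = 0$. For $\mathbf{b}$ I mimic Case~2 of Theorem~\ref{th1}, scaled down to $t$ entries: let $\beta \in F_r^*$ be a primitive $(t/2)$-th root of unity (available because $t/2 \mid r-1$), and pick a nonzero square $\zeta \in F_r^*$ with $\zeta \notin \langle\beta\rangle$ (available because the number of nonzero squares is $(r-1)/2 > t/2 = |\langle\beta\rangle|$ under the hypothesis $t<r-1$). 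Set $\mathbf{b} = (\beta,\beta^2,\dots,\beta^{t/2},\zeta\beta,\dots,\zeta\beta^{t/2})$.

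The main obstacle will be verifying constancy of $\eta(L_\mathbf{b})$ on $\mathbf{b}$. A direct computation, using Lemma~\ref{lem4} for the self-interaction of the $\beta^i$'s and the identity $\prod_{j=1}^{t/2}(x-\beta^j) = x^{t/2}-1$ for the cross terms, will give
\[
L_\mathbf{b}(\beta^i) = \tfrac{t}{2}\beta^{-i}(1-\zeta^{t/2}), \qquad L_\mathbf{b}(\zeta\beta^i) = \tfrac{t}{2}\zeta^{t/2-1}\beta^{-i}(\zeta^{t/2}-1).
\]
Constancy of $\eta$ across indices inside each block will reduce to $\eta(\beta)=1$, and matching the two blocks will reduce to $\eta(-1)\,\eta(\zeta)^{t/2-1} = 1$. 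All three needed facts hold: $\eta(\beta)=1$ because $\beta = \theta_r^{2(r-1)/t}$ is an even power of a generator of $F_r^*$; $\eta(\zeta)=1$ because $\zeta$ is a square in $F_r \subseteq F_q$; and $\eta(-1)=1$, which is exactly where the hypothesis $q\equiv 1\pmod 4$ is used. Once this constancy is in hand, Lemma~\ref{lem7} produces $\lambda' \in F_q^*$ with $\eta(\lambda' L_\mathbf{a}(\alpha))$ constant; rescaling by a non-square if the constant equals $-1$ and then applying Lemma~\ref{lem1} yields the required MDS self-dual $GRS_{tr^e/2}(\mathbf{a},\mathbf{v})$, after which Lemma~\ref{lem9} finishes by lifting to length $tr^e(1+r^s+\dots+r^{s(m-1)})$.
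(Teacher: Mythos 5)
Your proof is correct and follows essentially the same route as the paper: decompose $q-1=e_1e_2$ with $e_1=1+r^s+\dots+r^{s(m-1)}$ odd, realize $H_1=F_{r^s}^*$, build a self-dual $GRS_{tr^e/2}$ seed code with evaluation set inside $F_{r^s}^*$, and lift by Lemma~\ref{lem9}. The only (harmless) difference is in producing the seed code: the paper invokes Theorem~\ref{th1} over $F_{r^s}$ and then translates the evaluation set to avoid $0$, whereas you rebuild it directly from Lemma~\ref{lem7} with a single unified choice of $\mathbf{b}$ (a primitive $(t/2)$-th root of unity plus a square coset) and arrange $0\notin\mathbf{a}$ by construction.
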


\begin{proof}
Let $e_1=1+r^s+\dots+r^{s(m-1)}$, then $H_1=\langle \theta^{e_1}\rangle=F_{r^s}^*$. Since $m$ is odd, we have $e_1$ is odd. Note that $r^s\equiv1({\rm mod}~4)$,
by Theorem \ref{th1},
there exists a self-dual $GRS_{\frac{tr^e}{2}}(\mathbf{a}, \mathbf{v})$, for some $\mathbf{a}=\{\alpha_1,\alpha_2,\dots,\alpha_{tr^e}\}\subseteq F_{r^s}$,
$tr^e$ is even, and $\mathbf{v}\in(F_q^*)^{tr^e}$.
For any $a\in F_{r^s}$,
let $a+\mathbf{a}=\{a+\alpha_1,a+\alpha_2,\dots,a+\alpha_{tr^e}\}$.
Obviously, $L_{\mathbf{a}}(\alpha_i)=L_{a+\mathbf{a}}(a+\alpha_i)$.
This implies that when $1\leq tr^e \leq r^s-1$,
we can choose $\mathbf{a}\in F_{r^s}^*=H_1$.
Then by Lemma \ref{lem9}, there exists an MDS self-dual code of length $n=tr^e(1+r^s+\dots+r^{s(m-1)})$. This completes the proof.
\end{proof}

\begin{remark}
Theorem \ref{th8} generalizes known results.
\begin{itemize}
\item When $s=1$, we have $t(1+r+\dots+r^{m-1})\mid (q-1)$ and $q\equiv1 ({\rm mod}~4)$.
At this point, Theorem \ref{th8} is a special case of Theorem $1 (i)$ in \cite{RefJ10}.
\item When $m=1$, the preceding result is exactly Theorem \ref{th1}.
\end{itemize}
\end{remark}

By Lemma \ref{lem9} and Theorem \ref{th8}, we have the following corollary.

\begin{corollary}\label{cor1}
Let $q=r^{s m_1 m_2 \cdots m_l}$ and $q\equiv 1({\rm mod}~4)$, where $r$ is an odd prime power, $l$ and $s$ are positive integers, and $m_1m_2\cdots m_l$ is an odd integer. Let $t$ be an even integer such that $t \mid(r-1)$ and $0<t<r-1$, and let $0\leq e\leq s-1$. Then there exists a $q$-ary MDS self-dual code of length $t r^e (1+r^s+\cdots+r^{s(m_1-1)}) (1+r^{s m_1}+\cdots+r^{s m_1(m_2-1)})\cdots (1+r^{s m_1 m_2 \cdots m_{l-1} }+\cdots+r^{s m_1 m_2 \cdots m_{l-1} (m_l-1)})$.
\end{corollary}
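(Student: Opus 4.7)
The plan is to iterate Lemma \ref{lem9} exactly as in the proof of Theorem \ref{th8}, but $l$ times. Set $s_0 = s$ and $s_i = s m_1 m_2 \cdots m_i$ for $1 \leq i \leq l$, and define
$e_i = 1 + r^{s_{i-1}} + \cdots + r^{s_{i-1}(m_i-1)} = (r^{s_i}-1)/(r^{s_{i-1}}-1)$,
so that the target length factors telescopically as $N_l = tr^e \prod_{i=1}^{l} e_i$. Since $m_1 m_2 \cdots m_l$ is odd, each $m_i$ is odd; together with $q = r^{s_l} \equiv 1 \pmod 4$, this forces $r^{s_i} \equiv 1 \pmod 4$ for every $0 \leq i \leq l$, and in particular $r^s \equiv 1 \pmod 4$. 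Each $e_i$, being a sum of $m_i$ odd terms, is odd.

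For the base case I would apply Theorem \ref{th1} in the field $F_{r^s}$ with parameter $t/2$: the conditions $t/2 \mid (r-1)/2$, $t/2 \neq (r-1)/2$ translate to $t \mid r-1$ and $t \neq r-1$, which hold by hypothesis. This yields a self-dual $GRS_{tr^e/2}(\mathbf{a}_0,\mathbf{v}_0)$ with $\mathbf{a}_0 \subseteq F_{r^s}$. Using the translation invariance $L_{\mathbf{a}_0}(\alpha) = L_{a+\mathbf{a}_0}(a+\alpha)$ (as in the proof of Theorem \ref{th8}), and noting that $|\mathbf{a}_0| = tr^e \leq (r-2)r^{s-1} < r^s$, I translate $\mathbf{a}_0$ into $F_{r^s}^* = F_{r^{s_0}}^*$.

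For the inductive step, assume a self-dual $GRS_{N_{i-1}/2}(\mathbf{a}_{i-1},\mathbf{v}_{i-1})$ over $F_{r^{s_{i-1}}}$ with $\mathbf{a}_{i-1} \subseteq F_{r^{s_{i-1}}}^*$ and $N_{i-1}$ even (note $N_{i-1}$ carries the factor $t$, hence is always even). I first verify that the same data, read inside $F_{r^{s_i}}$, still defines a self-dual GRS code; this is the one point that needs care. The criterion in Lemma \ref{lem1} applied to $F_{r^{s_i}}$ involves $\eta_{r^{s_i}}$ evaluated on products of elements of $F_{r^{s_{i-1}}}^*$, so it suffices to show $\eta_{r^{s_i}}|_{F_{r^{s_{i-1}}}^*} = \eta_{r^{s_{i-1}}}$. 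For $\alpha \in F_{r^{s_{i-1}}}^*$, $\eta_{r^{s_i}}(\alpha) = \alpha^{(r^{s_i}-1)/2} = \bigl(\alpha^{(r^{s_{i-1}}-1)/2}\bigr)^{e_i} = \eta_{r^{s_{i-1}}}(\alpha)^{e_i} = \eta_{r^{s_{i-1}}}(\alpha)$, using that $e_i$ is odd.

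Once self-duality is preserved, I apply Lemma \ref{lem9} in $F_{r^{s_i}}$ with $e_1 := e_i$ (so $H_1 = \langle \theta^{e_i}\rangle = F_{r^{s_{i-1}}}^*$) and the role of $t$ played by the even integer $N_{i-1}$. Since $e_i$ is odd, the lemma produces a self-dual MDS code of length $N_{i-1}\cdot e_i = N_i$ over $F_{r^{s_i}}$, and its evaluation points lie in $F_{r^{s_i}}^*$ (they form a union of cosets of $H_1$ in $F_{r^{s_i}}^*$), which is exactly the hypothesis needed to continue. After $l$ iterations one reaches the desired code of length $N_l$ over $F_q$. The main obstacle is precisely the character-restriction verification in the inductive step (and the parallel bookkeeping that the evaluation points remain in the correct subfield's multiplicative group); everything else is a mechanical iteration of the construction already used for Theorem \ref{th8}.
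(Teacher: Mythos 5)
Your proof is correct and follows essentially the same route the paper intends: the paper offers no explicit argument for Corollary~\ref{cor1} beyond ``by Lemma~\ref{lem9} and Theorem~\ref{th8},'' i.e.\ iterating Lemma~\ref{lem9} up the tower of subfields starting from the Theorem~\ref{th8} construction, which is exactly what you do. Your explicit verification that the quadratic character of $F_{r^{s_i}}$ restricts to that of $F_{r^{s_{i-1}}}$ (because each $e_i$ is odd) is a detail the paper leaves implicit, and it is the right thing to check.
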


\begin{theorem}\label{th9}
Let $q=r^{s m}$, where $r$ is an odd prime power, $m$ is an odd integer and $s$ is a positive integr. Let $t$ be an odd integer such that $t\mid (r-1)$, and let $0\leq e\leq s-1$. If $\eta(-t)=1$, then there exists a $q$-ary self-dual MDS code of length $(t+1) r^e(1+r^s+\cdots+r^{s(m-1)})$.
\end{theorem}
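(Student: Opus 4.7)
The plan mirrors the proof of Theorem \ref{th8}: first produce a self-dual GRS code of length $(t+1)r^e$ over $F_{r^s}$ whose evaluation points lie in $F_{r^s}^{\ast}$, and then lift it by Lemma \ref{lem9} to length $(t+1)r^e\cdot e_1$ over $F_q$, where $e_1 = 1+r^s+\cdots+r^{s(m-1)} = (q-1)/(r^s-1)$, so that $H_1 = \langle\theta^{e_1}\rangle = F_{r^s}^{\ast}$. Because $r$ is odd and $m$ is odd, $e_1$ is a sum of $m$ odd integers and hence odd, which is precisely the parity hypothesis needed for Lemma \ref{lem9}; moreover $(t+1)r^e$ is even since $t$ is odd, so Lemma \ref{lem9} applies with this even value in the role of its ``$t$''.

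For the base code I would imitate the proof of Theorem \ref{th4}. Let $\beta\in F_r^{\ast}$ be a primitive $t$-th root of unity and set $\tilde{\mathbf{b}} = (0,\beta,\beta^2,\ldots,\beta^t)$, a $(t+1)$-tuple of distinct elements of $F_r$. By Lemma \ref{lem4}, $L_{\tilde{\mathbf{b}}}(0) = -1$ and $L_{\tilde{\mathbf{b}}}(\beta^i) = t$ for $1\le i\le t$. Since $m$ is odd, $(q-1)/(r-1)$ and $(r^s-1)/(r-1)$ have the same parity (both equal to the parity of $s$), and hence for every $x\in F_r^{\ast}$ one has $\eta_q(x) = \eta_{r^s}(x)$. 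In particular $\eta_{r^s}(-t) = 1$, so $\eta_{r^s}(L_{\tilde{\mathbf{b}}}(\alpha))$ is constant as $\alpha$ ranges over $\tilde{\mathbf{b}}$. Next apply Lemma \ref{lem7} inside $F_{r^s}$, with the even integer $t+1$ playing the role of ``$t$'' and with any $e$-dimensional $F_r$-subspace $V\subseteq F_{r^s}$ and $\zeta\in F_{r^s}\setminus V$: this produces $\mathbf{a}\subseteq F_{r^s}$ of size $(t+1)r^e$ and some $\lambda_1\in F_{r^s}^{\ast}$ with $\eta_{r^s}(\lambda_1 L_{\mathbf{a}}(\alpha_{ij})) = \eta_{r^s}(L_{\tilde{\mathbf{b}}}(\tilde{\beta}_i))$. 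Rescaling $\lambda_1$ by a suitable element of $F_{r^s}^{\ast}$ (e.g.\ by $-1$) yields $\lambda$ with $\eta_{r^s}(\lambda L_{\mathbf{a}}(\alpha_{ij})) = 1$ for all $i,j$; since squares of $F_{r^s}^{\ast}$ remain squares in $F_q^{\ast}$, Lemma \ref{lem1} gives a self-dual $GRS_{(t+1)r^e/2}(\mathbf{a},\mathbf{v})$ over $F_q$.

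Finally, because $t$ is odd and $t\mid(r-1)$ with $r-1$ even, $t\le(r-1)/2$ and so $(t+1)r^e\le\tfrac{r+1}{2}r^{s-1}<r^s$, so there exists $a\in F_{r^s}$ with $a+\mathbf{a}\subseteq F_{r^s}^{\ast}=H_1$; translation leaves every value of $L_{\mathbf{a}}$ unchanged, so the translated code is still a self-dual GRS code over $F_q$ with evaluation points in $H_1$. Invoking Lemma \ref{lem9} with the even integer $(t+1)r^e$ then delivers the required $q$-ary MDS self-dual code of length $(t+1)r^e(1+r^s+\cdots+r^{s(m-1)})$. The main delicate point is the character-descent identity $\eta_q(x) = \eta_{r^s}(x)$ on $F_r^{\ast}$, which rests on $m$ being odd; this is what allows the odd-$t$ statement to go through under only the assumption $\eta(-t) = 1$, in place of the stronger $q\equiv 1\pmod 4$ used for Theorem \ref{th8}.
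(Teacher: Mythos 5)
Your proposal is correct and follows essentially the same route as the paper: build a self-dual GRS code of even length $(t+1)r^e$ with evaluation points translatable into $H_1=F_{r^s}^*$, then lift by Lemma \ref{lem9} using that $e_1=1+r^s+\cdots+r^{s(m-1)}$ is odd. The only difference is that where the paper simply cites \cite[Theorem 3.4]{RefJ6} for the base code, you reconstruct it explicitly (via Lemma \ref{lem4} and Lemma \ref{lem7}, as in Theorem \ref{th4}) and, usefully, make explicit the character-descent fact $\eta_q|_{F_r^*}=\eta_{r^s}|_{F_r^*}$ (valid since $m$ is odd) that the paper leaves implicit.
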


\begin{proof}
By \cite[Theorem 3.4]{RefJ6} and the proof of Theorem \ref{th8},
there exists a self-dual $GRS_{\frac{(t+1)r^e}{2}}(\mathbf{a}, \mathbf{v})$ for some $\mathbf{a}=\{\alpha_1,\alpha_2,\dots,\alpha_{(t+1)r^e}\}\subseteq H_1$,
$(t+1)r^e$ is even, and $\mathbf{v}\in (F_q^*)^{(t+1)r^e}$. By Lemma \ref{lem9}, there exists an MDS self-dual code of length $n=(t+1)r^e(1+r^s+\cdots+r^{s(m-1)})$. This completes the proof.
\end{proof}

By Lemma \ref{lem9} and Theorem \ref{th9}, we have the following corollary.

\begin{corollary}\label{cor2}
Let $q=r^{s m_1 m_2 \cdots m_l}$, where $r$ is an odd prime power, $s$ and $l$ are positive integers, and $m_1m_2\cdots m_l$ is an odd integer. Let $t$ be an odd integer and $t\mid (r-1)$, and let $0\leq e\leq s-1$. If $\eta(-t)=1$, then there exists a $q$-ary MDS self-dual code of length $(t+1)r^e(1+r^s+\dots+r^{s(m_1-1)})(1+r^{sm_1}+\dots+r^{sm_1(m_2-1)})\dots(1+r^{sm_1m_2\dots m_{l-1}}+\dots+r^{sm_1m_2\dots m_{l-1}(m_l-1)})$.
\end{corollary}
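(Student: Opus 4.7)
The plan is to prove Corollary \ref{cor2} by iterating Lemma \ref{lem9} along the subfield tower $F_{r^s} \subset F_{r^{sm_1}} \subset \cdots \subset F_{r^{sm_1\cdots m_l}} = F_q$, using Theorem \ref{th9} as the base case and each application of Lemma \ref{lem9} to peel off one factor $m_i$ at a time. I would induct on $l$. For $l = 1$ the statement coincides with Theorem \ref{th9}. For the inductive step, set $q_j := r^{sm_1 \cdots m_j}$ and assume the previous stages have produced a self-dual GRS code whose evaluation set sits inside $F_{q_{l-1}}^* \subseteq F_q^*$ and whose length is
\[
n^{(l-1)} = (t+1)r^e \prod_{i=1}^{l-1} \bigl(1 + q_{i-1} + \cdots + q_{i-1}^{m_i - 1}\bigr).
\]

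Then I would apply Lemma \ref{lem9} to the full field $F_q$ with $e_2 = q_{l-1} - 1$, so that $H_1 = F_{q_{l-1}}^*$ and $e_1 = (q-1)/(q_{l-1}-1) = 1 + q_{l-1} + \cdots + q_{l-1}^{m_l - 1}$. Since $m_l$ divides the odd product $m_1 \cdots m_l$ it is itself odd, and $q_{l-1}$ is odd, so $e_1$ is a sum of $m_l$ odd terms and hence odd, as required by Lemma \ref{lem9}. The lemma then yields a $q$-ary MDS self-dual code of length $n^{(l-1)} \cdot e_1$, which matches the asserted length after one more factor is included in the product.

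The main obstacle is verifying, at each inductive level, that the output of Lemma \ref{lem9} is again a self-dual \emph{GRS} code whose evaluation set lies in the next-larger subfield's multiplicative group, so that the iteration may continue. This requires tracing the proof of Lemma \ref{lem9} in \cite{RefJ9}: that construction builds the extended code on the union of cosets $\bigcup_i \alpha_i H_2$, which is automatically contained in $F_{q_l}^*$ whenever $H_2 \subseteq F_{q_l}^*$ and $\alpha_i \in F_{q_l}^*$. Once this structural bookkeeping is in place, each iteration contributes exactly the factor $1 + q_{j-1} + \cdots + q_{j-1}^{m_j - 1}$ claimed in the product expansion.

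As a sanity check, I would note the following one-shot alternative: since $M := m_1 m_2 \cdots m_l$ is odd by hypothesis, Theorem \ref{th9} applied directly with $m = M$ over $q = r^{sM}$ already yields an MDS self-dual code of length $(t+1)r^e(1 + r^s + \cdots + r^{s(M-1)})$, and the telescoping identity
\[
\frac{r^{sM} - 1}{r^s - 1} = \prod_{i=1}^l \frac{r^{sm_1 \cdots m_i} - 1}{r^{sm_1 \cdots m_{i-1}} - 1}
\]
identifies this length with the factored form in the corollary. This shortcut gives the existence without the GRS-tracking of the iterative approach, but the iterative viewpoint makes the role of each $m_i$ transparent and motivates the statement.
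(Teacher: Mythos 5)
Your main route---induct on $l$, using Theorem \ref{th9} as the base case and peeling off one factor $m_i$ per application of Lemma \ref{lem9}---is exactly the paper's intended argument (the paper gives no more detail than ``By Lemma \ref{lem9} and Theorem \ref{th9}''), and the bookkeeping you flag (the evaluation set staying inside the next subfield's multiplicative group, and the quadratic-character condition surviving an odd-degree extension) is the same implicit step the paper relies on in Theorems \ref{th8}--\ref{th11}. Your ``one-shot'' sanity check is also correct and is in fact the cleanest justification: since $M=m_1\cdots m_l$ is odd, the telescoping identity shows the asserted length equals $(t+1)r^e\frac{r^{sM}-1}{r^s-1}$, so the corollary is literally Theorem \ref{th9} applied with $m=M$ over $q=r^{sM}$, with no iteration or GRS-tracking needed.
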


\begin{lemma}\label{lem10}
Let $t$ be odd. Suppose $GRS_{\frac{t+1}{2}}(\mathbf{a},\mathbf{v},\infty)$ is self-dual for some $\mathbf{a}=\{\alpha_1,\alpha_2,\dots, \alpha_t\}\subseteq H_1$ and $\mathbf{v}\in( F_q^*)^{t}$. If $e_1$ is odd and $\eta(e_1)=1$, then there exists a $q$-ary MDS self-dual code of length $te_1+1$.
\end{lemma}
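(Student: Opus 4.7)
The plan is to follow the multiplication-by-cosets strategy behind Lemma~\ref{lem9}, but adapted to the extended GRS setting. The key observation is that $H_1=\langle\theta^{e_1}\rangle$ is precisely the image of the $e_1$-th power map $\phi\colon F_q^*\to F_q^*$, $x\mapsto x^{e_1}$, whose kernel coincides with $H_2$ and has order $e_1$. Because $\mathbf{a}\subseteq H_1$, each $\alpha\in \mathbf{a}$ has exactly $e_1$ preimages under $\phi$ (a full coset of $H_2$), so the lift $\mathbf{A}:=\phi^{-1}(\mathbf{a})$ is a disjoint union of $t$ cosets of $H_2$ with $|\mathbf{A}|=te_1$. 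Since $t$ and $e_1$ are both odd, $te_1+1$ is even, the parity required by Lemma~\ref{lem2}.

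Next I would compute $f_\mathbf{A}$ by exploiting the fact that the fiber of $\phi$ over each $\alpha\in\mathbf{a}$ is the zero set of $x^{e_1}-\alpha$:
\[
f_\mathbf{A}(x)=\prod_{\alpha\in \mathbf{a}}(x^{e_1}-\alpha)=f_\mathbf{a}(x^{e_1}),
\]
and the chain rule then yields $L_\mathbf{A}(b)=e_1\,b^{e_1-1}\,L_\mathbf{a}(b^{e_1})$ for every $b\in \mathbf{A}$. To invoke Lemma~\ref{lem2} on $\mathbf{A}$ it suffices to verify $\eta(-L_\mathbf{A}(b))=1$ for all $b\in \mathbf{A}$. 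With $\alpha=b^{e_1}\in \mathbf{a}$, expanding the character gives
\[
\eta(-L_\mathbf{A}(b))=\eta(-1)\,\eta(e_1)\,\eta(b^{e_1-1})\,\eta(L_\mathbf{a}(\alpha)).
\]
The exponent $e_1-1$ is even, so $\eta(b^{e_1-1})=1$; the self-duality of $GRS_{(t+1)/2}(\mathbf{a},\mathbf{v},\infty)$ forces $v_i^2=-L_\mathbf{a}(\alpha_i)^{-1}$, so $\eta(-L_\mathbf{a}(\alpha))=1$ and hence $\eta(L_\mathbf{a}(\alpha))=\eta(-1)$. Combined with the hypothesis $\eta(e_1)=1$, the display collapses to $\eta(-1)^2\eta(e_1)=1$, and Lemma~\ref{lem2} yields $\mathbf{V}\in(F_q^*)^{te_1}$ making $GRS_{(te_1+1)/2}(\mathbf{A},\mathbf{V},\infty)$ an MDS self-dual code of length $te_1+1$.

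The one delicate point is the sign bookkeeping in the final $\eta$-identity: the factor $\eta(L_\mathbf{a}(\alpha))$ equals $\eta(-1)$, not $1$, because the extended-GRS criterion is phrased with $-L_\mathbf{a}$, and this $\eta(-1)$ must combine with the leading $\eta(-1)$ of $\eta(-L_\mathbf{A}(b))$ to square to $1$, leaving exactly the $\eta(e_1)$ that the hypothesis kills. This is the real reason the assumption $\eta(e_1)=1$ appears here but is absent from the analogue Lemma~\ref{lem9}, where the plain GRS criterion (Remark~\ref{rem1}) only demands that $\eta(L_\mathbf{A}(b))$ be constant in $b$ and hence tolerates an arbitrary value of $\eta(e_1)$.
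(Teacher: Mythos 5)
Your proof is correct and essentially identical to the paper's: your lift $\mathbf{A}=\phi^{-1}(\mathbf{a})$ under the $e_1$-th power map is exactly the union of cosets $W=\bigcup_{i}\theta^{v(\alpha_i)}H_2$ used there, and your chain-rule identity $L_{\mathbf{A}}(b)=e_1b^{e_1-1}L_{\mathbf{a}}(b^{e_1})$ reproduces the paper's coset-by-coset computation via Lemma \ref{lem5}. The character bookkeeping (using that $e_1-1$ is even, that self-duality of the short extended code gives $\eta(-L_{\mathbf{a}}(\alpha))=1$, and that $\eta(e_1)=1$ by hypothesis) matches the paper's argument, and your closing observation correctly explains why $\eta(e_1)=1$ is needed here but not in Lemma \ref{lem9}.
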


\begin{proof}
Define
\begin{equation}
W=\bigcup_{1\leq i \leq t}(\theta^{v(\alpha_i)}H_2).
\end{equation}
Write all these $n=te_1$ elements as a vector $\mathbf{b}=(\beta_1,\beta_2,\dots,\beta_n)$.
By Lemma \ref{lem5},
\begin{equation}
f_{\theta^{v(\alpha_i)}H_2}(x)=x^{e_1}-\theta^{v(a_i)e_1}~and~L_{\theta^{v(\alpha_i)} H_2}(x)=f'_{\theta^{v(\alpha_i)} H_2}(x)=e_1x^{e_1-1}.
\end{equation}
If $\beta_i\in \theta^{v(\alpha_k)}H_2$ for some $k$, then there exists an integer $0\leq u\leq e_1-1$ such that $\beta_i=\theta^{v(\alpha_k)+e_2u}$.
Thus,
\[\begin{split}
	L_{\mathbf{b}}(\beta_i)&=L_{\theta^{v(\alpha_k)} H_2}(\beta_i)\cdot \prod_{1\leq j\leq t, j\neq k}f_{\theta^{v(\alpha_j)} H_2}(\beta_i)\\
                           &=e_1\theta^{v(\alpha_k)(e_1-1)}\theta^{-e_2u}\prod_{1\leq j\leq t, j\neq k}(\theta^{(v(\alpha_k)+e_2u)e_1}-\theta^{v(\alpha_j)e_1})\\
                           &=e_1\theta^{v(\alpha_k)(e_1-1)}\theta^{-e_2u}L_{\mathbf{a}}(\alpha_k).\\
	\end{split}\]

By \cite[Lemma 2.2]{RefJ9}, there exists $\mathbf{v}=(v_1,v_2,\dots,v_t)\in (F_q^*)^n$ such that  $GRS_{\frac{t+1}{2}}(\mathbf{a},\mathbf{v},\infty)$ is self-dual if and only if  $\eta(-L_{\mathbf{a}}(\alpha_k))=1$. Note that $e_1$ is odd and $e_2$ is even, we have
$$\eta(-L_{\mathbf{b}}(\beta_i))=\eta(e_1\cdot (-L_{\mathbf{a}}(\alpha_k)))=\eta(e_1)=1.$$
By Lemma \ref{lem2},
there exists a $q$-ary MDS self-dual code of length $te_1+1$.
\end{proof}

\begin{remark}\label{n(f)=1}
In Lemma \ref{lem10},
We can change condition $\eta(e_1)=1$ to $q\equiv 1(${\rm mod} $\,4)$. When $q\equiv 1({\rm mod}\,4)$. Since $(\frac{q-1+1}{e_1})=(\frac{1}{e_1})=1$, by Quadratic Reciprocity, we have $(\frac{q}{e_1})(\frac{e_1}{q})=(-1)^{\frac{q-1}{2}\frac{e_1-1}{2}}=1$. Therefore, $\eta(e_1)=1$.
\end{remark}

\begin{theorem}\label{th10}
Let $q=r^{s m}$, where $r$ is an odd prime power, $m$ is an odd integer, and $s$ is a positive integer. Let $t$ be an odd integer such that $t\mid (r-1)$, and let $0\leq e \leq s-1$. If $\eta((-1)^{\frac{r^e+1}{2}}t)=1$, then there exists a $q$-ary MDS self-dual code of length $tr^e(1+r^s+\dots+r^{s(m-1)})+1$.
\end{theorem}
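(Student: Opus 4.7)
The plan is to mirror the two-step lifting strategy used in Theorems \ref{th8} and \ref{th9}: first obtain a self-dual MDS extended GRS code over the subfield $F_{r^s}$ of length $tr^e+1$ with evaluation points in $F_{r^s}^*$, and then inflate it to a code over $F_q$ of length $tr^e(1+r^s+\cdots+r^{s(m-1)})+1$ via Lemma \ref{lem10}.

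I would start by setting $e_1=1+r^s+\cdots+r^{s(m-1)}=(q-1)/(r^s-1)$, so that $H_1=\langle\theta^{e_1}\rangle=F_{r^s}^*$. Two observations are needed before applying Lemma \ref{lem10}. First, since $m$ is odd and each $r^{sj}$ is odd, $e_1$ is odd. Second, if $p$ is the characteristic of $F_q$ (so $p\mid r$), then $r^{sj}\equiv 0\pmod p$ for $j\geq 1$, giving $e_1\equiv 1\pmod p$; hence the image of $e_1$ in $F_q$ equals $1$, so $\eta(e_1)=1$ automatically. As a side consequence, because $e_1$ is odd, the restriction of the quadratic character of $F_q$ to $F_{r^s}^*$ agrees with the quadratic character of the subfield (the identification $\theta_s=\theta^{e_1}$ preserves the parity of exponents), so the self-duality criterion of an extended GRS with evaluation points in $F_{r^s}^*$ can be checked over either field.

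Next, I would invoke the existence of a self-dual MDS extended GRS code over $F_{r^s}$ of length $tr^e+1$ under the hypothesis $\eta((-1)^{(r^e+1)/2}t)=1$, which is the content of Class 27 of Table \ref{tab:3} (cf.\ \cite{RefJ6}). Following the translation trick used in the proof of Theorem \ref{th8}, I would shift its evaluation set $\mathbf{a}'$ by a suitable constant $a\in F_{r^s}$ so that the translated set $\mathbf{a}$ lies entirely in $F_{r^s}^*=H_1$; this is feasible because $|\mathbf{a}'|=tr^e\leq(r-1)r^{s-1}<r^s$, and translation leaves $L_{\mathbf{a}}(\cdot)$ invariant, hence preserves the self-duality criterion of Lemma \ref{lem2}. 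Applying Lemma \ref{lem10} with $|\mathbf{a}|=tr^e$ (odd, since $t$, $r$ are odd) and our $e_1$ (odd, with $\eta(e_1)=1$) then delivers the desired $q$-ary MDS self-dual code of length $tr^e(1+r^s+\cdots+r^{s(m-1)})+1$.

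The main obstacle I anticipate is the boundary case $e=0$, since the tabulated form of the result in Class 27 formally requires $z\geq 1$. For this case I would construct the needed extended GRS directly: choose $\beta\in F_r^*$ of multiplicative order $t$ and take $\mathbf{a}'=\{\beta,\beta^2,\ldots,\beta^t\}\subseteq F_r^*\subseteq H_1$ (no translation is needed). Lemma \ref{lem4} gives $L_{\mathbf{a}'}(\beta^i)=t\beta^{-i}$; because $t$ is odd, the integer $(r^s-1)/t$ is even, so $\eta(\beta)=1$, and combining with the hypothesis $\eta(-t)=\eta((-1)^{(r^0+1)/2}t)=1$ we obtain $\eta(-L_{\mathbf{a}'}(\beta^i))=1$ for every $i$, giving the self-duality required by Lemma \ref{lem2}. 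This completes the only step not directly covered by a prior cited result.
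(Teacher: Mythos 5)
Your proposal follows essentially the same route as the paper: obtain the base self-dual extended GRS code of length $tr^e+1$ over $F_{r^s}$ from \cite{RefJ6}, translate its evaluation set into $H_1=F_{r^s}^*$ as in the proof of Theorem \ref{th8}, and lift via Lemma \ref{lem10} using $\eta(e_1)=1$ for $e_1=1+r^s+\cdots+r^{s(m-1)}$. The extra justifications you supply --- that $e_1\equiv 1$ in $F_q$ so $\eta(e_1)=1$ directly, that the quadratic characters of $F_q$ and $F_{r^s}$ agree on $F_{r^s}^*$ because $e_1$ is odd, and the explicit construction for the boundary case $e=0$ --- are correct details that the paper leaves implicit.
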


\begin{proof}
By \cite[Theorem 3.2]{RefJ6}  and the proof of Theorem \ref{th8},
there exists a self-dual $GRS_{\frac{tr^e+1}{2}}(\mathbf{a},\mathbf{v},\infty)$,
for some $\mathbf{a}=\{\alpha_{1,1},\dots,\alpha_{1,r^e},\dots,\alpha_{t,r^e}\}\subseteq H_1= F_{r^s}^*$,
and $v\in (F_q^*)^t$.
Note that $\eta(1+r^s+\dots +r^{s(m-1)})=1$, it then follows from Lemma \ref{lem10} that there exists a $q$-ary MDS self-dual code of length $tr^e(1+r^s+\dots +r^{s(m-1)})+1$. This completes the proof.
\end{proof}

By Lemma \ref{lem10} and Theorem \ref{th10}, we have the following result.

\begin{corollary}\label{cor3}
Let $q=r^{s m_1m_2 \cdots m_l}$, where $r$ is an odd prime power, $s$ and $l$ are positive integers, and $m_1m_2 \cdots m_l$ is odd. Let $t$ be an odd integer such that $t\mid(r-1)$, and let $0\leq e \leq s-1$. If $\eta((-1)^{\frac{r^e+1}{2}}t)=1$, then there exists a $q$-ary MDS self-dual code of length $t r^e(1+r^s+\dots+r^{s(m_1-1)})(1+r^{s m_1}+\cdots+r^{s m_1(m_2-1)})\cdots(1+r^{s m_1m_2\cdots m_{l-1}}+\cdots+r^{s m_1 m_2\cdots m_{l-1}(m_l-1)})+1$.

\end{corollary}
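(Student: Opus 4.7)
The plan is to observe that the length appearing in the corollary telescopes, so the statement reduces to a single application of Theorem \ref{th10}; equivalently, one can give an inductive proof via Lemma \ref{lem10}. Set $M_0=1$, $M_j=m_1m_2\cdots m_j$ for $1\le j\le l$, and
\[
  N_j \;=\; 1+r^{sM_{j-1}}+r^{2sM_{j-1}}+\cdots+r^{sM_{j-1}(m_j-1)} \;=\; \frac{r^{sM_j}-1}{r^{sM_{j-1}}-1}.
\]
Telescoping then gives
\[
  \prod_{j=1}^{l} N_j \;=\; \frac{r^{sM_l}-1}{r^{s}-1}\;=\; 1+r^s+r^{2s}+\cdots+r^{s(M_l-1)},
\]
so the target length equals $tr^e\bigl(1+r^s+\cdots+r^{s(M_l-1)}\bigr)+1$. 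Since $M_l=m_1m_2\cdots m_l$ is odd by hypothesis and the assumptions on $t$, $e$, and $\eta\bigl((-1)^{(r^e+1)/2}t\bigr)$ transfer verbatim, Theorem \ref{th10} applied with $m=M_l$ produces the desired $q$-ary MDS self-dual code.

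For an iterative proof matching the lemma-plus-theorem pattern of Corollaries \ref{cor1} and \ref{cor2}, I would induct on $l$. The base case $l=1$ is Theorem \ref{th10}. In the inductive step I assume a self-dual $GRS_{n'/2}(\mathbf{a},\mathbf{v},\infty)$ over $F_{r^{sM_{l-1}}}$ with $\mathbf{a}\subseteq F_{r^{sM_{l-1}}}^{*}$ and $n'=tr^e\prod_{j=1}^{l-1}N_j+1$, and then apply Lemma \ref{lem10} with $q=r^{sM_l}$, $e_1=N_l$, and $H_1=F_{r^{sM_{l-1}}}^{*}=\langle\theta^{e_1}\rangle$. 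The two hypotheses of Lemma \ref{lem10} hold: $N_l$ is odd because $m_l$ is odd (as a divisor of the odd integer $m_1\cdots m_l$), and $\eta_{F_q}(N_l)=1$ automatically, since $r$ is a power of the characteristic $p$ of $F_q$, so every term of $N_l$ except the leading $1$ reduces to $0$ modulo $p$, giving $N_l\equiv 1$ in $F_q$.

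The main point that requires a little care is propagating the invariant $\mathbf{a}\subseteq F_{r^{sM_{j}}}^{*}$ through the induction. This is immediate from the construction of Lemma \ref{lem10}: starting from $\mathbf{a}\subseteq F_{r^s}^{*}$ supplied by the proof of Theorem \ref{th10}, the new evaluation set in Lemma \ref{lem10} is the union of cosets $\theta^{v(\alpha_i)}H_2$ with $H_2\subseteq F_{r^{sM_j}}^{*}$, so the new points are nonzero and lie in the subfield required for the next step. Hence the induction carries through and yields the claimed MDS self-dual code of length $tr^e\prod_{j=1}^{l}N_j+1$.
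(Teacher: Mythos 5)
Your proposal is correct, and your second (iterative) route is exactly what the paper intends by its one-line justification ``By Lemma \ref{lem10} and Theorem \ref{th10}'': induct on $l$, feed the evaluation set produced at stage $j$ (a union of cosets inside $F_{r^{sM_j}}^{*}$, hence contained in the subgroup $H_1$ needed at stage $j+1$) back into Lemma \ref{lem10}, with $e_1=N_{l}$ odd because $m_l$ is odd and $\eta(N_l)=1$ because $N_l\equiv 1\pmod p$. Your first observation is the more interesting one and is worth highlighting: since $\prod_{j=1}^{l}N_j=\frac{r^{sM_l}-1}{r^s-1}=1+r^s+\cdots+r^{s(M_l-1)}$ with $M_l=m_1\cdots m_l$ odd, the length in Corollary \ref{cor3} is \emph{literally} the length in Theorem \ref{th10} with $m=M_l$, and all hypotheses carry over verbatim; so the corollary is not a genuine strengthening but a restatement, and no induction is needed at all. (The same telescoping remark applies to Corollaries \ref{cor1}, \ref{cor2} and \ref{cor4}.) Both of your arguments are sound; the only point worth making explicit in the iterative version is that self-duality of the short extended GRS code over the subfield persists over the extension because the criterion $\eta(-L_{\mathbf{a}}(\alpha_k))=1$ is preserved when squares of the subfield are viewed in the larger field --- a step the paper itself glosses over in Theorems \ref{th8}--\ref{th11}.
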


\begin{theorem}\label{th11}
Let $q=r^{s m}$, where $r$ is an odd prime power, $m$ is an odd integer and $s$ is a positive integer. Let $t$ be an even integer such that $t\mid(r-1)$ and $1\leq t< r-1$, and let $0\leq e \leq s-1$. If $\eta(t)=\eta(-1)=1$, or $\eta(-t)=1$ and $e$ is even, then there exists a $q$-ary MDS self-dual code of length $(t+1)r^e(1+r^s+\dots +r^{s(m-1)})+1$.
\end{theorem}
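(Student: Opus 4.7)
The plan is to follow the same schema as Theorems \ref{th9} and \ref{th10}: first build a self-dual extended GRS code of length $(t+1)r^e+1$ over $F_q$ whose evaluation points all lie inside $H_1=F_{r^s}^*$, and then inflate the length by the factor $e_1=1+r^s+\cdots+r^{s(m-1)}$ using Lemma \ref{lem10}.

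For the inner construction, I would reuse the argument of Theorem \ref{th4}. Take the base tuple $\mathbf{b}=(0,\beta,\beta^2,\dots,\beta^t)$ with $\beta\in F_r^*$ a primitive $t$-th root of unity, and apply Lemma \ref{lem7} using an $F_r$-subspace $V\subseteq F_{r^s}$ of dimension $e$ together with $\zeta\in F_{r^s}\setminus V$; both choices are possible because $e\le s-1$. Under the hypothesis $\eta(t)=\eta(-1)=1$, or $\eta(-t)=1$ with $e$ even, Lemma \ref{lem8} guarantees that the resulting code $GRS_{((t+1)r^e+1)/2}(\mathbf{a},\mathbf{v},\infty)$ of length $(t+1)r^e+1$ is self-dual, and by construction $\mathbf{a}\subseteq F_{r^s}$.

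Next, I would shift $\mathbf{a}$ so its entries lie in $F_{r^s}^*$. Extended GRS codes are translation invariant: substituting $f(x)\mapsto f(x+c)$ preserves both the degree bound and the leading coefficient, so $GRS_k(\mathbf{a}+c,\mathbf{v},\infty)=GRS_k(\mathbf{a},\mathbf{v},\infty)$ as codes, and the self-duality criterion is preserved by the identity $L_{\mathbf{a}-c}(\alpha-c)=L_{\mathbf{a}}(\alpha)$. From $t<r-1$ and $e\le s-1$ one has $|\mathbf{a}|=(t+1)r^e\le(r-1)r^{s-1}<r^s$, so some $c\in F_{r^s}\setminus\mathbf{a}$ exists, and then $\mathbf{a}-c\subseteq F_{r^s}^*=H_1$.

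Finally, I would apply Lemma \ref{lem10} with its parameter $t$ replaced by $N:=(t+1)r^e$. Here $N$ is odd since $t$ is even and $r$ is odd, and $e_1$ is odd as a sum of $m$ odd terms with $m$ odd. The hypothesis $\eta(e_1)=1$ holds automatically: writing $r=p^j$ with $j\ge 1$, every term $r^{is}$ with $i\ge 1$ vanishes in $F_q$, so $e_1\equiv 1$ in $F_q$ and $\eta(e_1)=\eta(1)=1$. Lemma \ref{lem10} then produces a $q$-ary MDS self-dual code of length $N\cdot e_1+1=(t+1)r^e(1+r^s+\cdots+r^{s(m-1)})+1$, as required. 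The only non-routine step is the translation in the third paragraph, where the strict inequality $t<r-1$ is what leaves room for the shift inside $F_{r^s}$.
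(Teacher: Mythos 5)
Your proposal is correct and follows essentially the same route as the paper: the inner self-dual extended GRS code of length $(t+1)r^e+1$ from Theorem \ref{th4}, the translation of its evaluation set into $H_1=F_{r^s}^*$ borrowed from the proof of Theorem \ref{th8}, and then the length-multiplication lemma for extended GRS codes. You are in fact slightly more careful than the printed proof, which cites Lemma \ref{lem9} where Lemma \ref{lem10} is clearly the lemma intended, and which leaves implicit the verifications that $e_1$ is odd and $\eta(e_1)=1$ that you supply explicitly.
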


\begin{proof}
By Theorem \ref{th4} and the proof of Theorem \ref{th8},
there exists a self-dual $GRS_{\frac{(t+1)r^e+1}{2}}(\mathbf{a}, \mathbf{v},\infty)$,
for some $\mathbf{a}=\{\alpha_1,\alpha_2,\dots,\alpha_{(t+1)r^e}\}\subseteq H_1$,
$(t+1)r^e$ is odd, and $\mathbf{v}\in (F_q^*)^{(t+1)r^e}$.
Then by Lemma \ref{lem9},
there exists an MDS self-dual code of length $n=(t+1)r^e(1+r^s+\dots+r^{s(m-1)})+1$. This completes the proof.
\end{proof}

\begin{remark}
Theorem \ref{th11} generalizes known results.
When $m=1$, the preceding result is exactly Theorem \ref{th4}.
\end{remark}

Using Lemma \ref{lem10} and Theorem \ref{th11}, we have the following result.

\begin{corollary}\label{cor4}
Let $q=r^{s m_1 m_2 \cdots m_l}$, where $r$ is an odd prime power, $s$ and $l$ are positive integers, and $m_1m_2 \cdots m_l$ is odd. Let $t$ be an even integer such that $t\mid (r-1)$ and $1\leq t< r-1$, and let $0\leq e \leq s-1$. Then there exists a $q$-ary MDS self-dual code of length $(t+1)r^e(1+r^s+\cdots+r^{s(m_1-1)})(1+r^{s m_1}+\cdots+r^{s m_1(m_2-1)})\cdots (1+r^{s m_1 m_2\cdots m_{l-1}}+\cdots+r^{s m_1m_2\cdots m_{l-1}(m_l-1)})+1$.
\end{corollary}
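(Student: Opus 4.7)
The plan is to iterate Lemma \ref{lem10} starting from Theorem \ref{th11}, climbing the tower of subfields
\[
F_{r^s} \subseteq F_{r^{sN_1}} \subseteq F_{r^{sN_2}} \subseteq \cdots \subseteq F_{r^{sN_l}} = F_q,
\qquad N_i := m_1 m_2 \cdots m_i,
\]
using at step $i+1$ the index
$e_1^{(i+1)} := (r^{sN_{i+1}}-1)/(r^{sN_i}-1) = 1 + r^{sN_i} + \cdots + r^{sN_i(m_{i+1}-1)}$.
The target length is exactly $(t+1)r^e \prod_{j=1}^{l} e_1^{(j)} + 1$, so it is reached after $l-1$ iterations.

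For the base step, I would view $F_{r^{sN_1}} = F_{r^{sm_1}}$ as the ambient field and invoke Theorem \ref{th11} (applicable because $m_1$ is odd and the conditions on $r,t,e$ carry over). This produces a self-dual extended GRS code over $F_{r^{sN_1}}$ of length $(t+1)r^e e_1^{(1)} + 1$ whose finite support is a union of cosets of an $F_r$-subspace of $F_{r^{sN_1}}$. Translating by a suitable $a \in F_{r^{sN_1}}$ preserves every $L_{\mathbf{a}}(\alpha_i)$ and hence self-duality, so I can arrange the finite support to lie in $F_{r^{sN_1}}^*$. This sets up the invariant ``support in $F_{r^{sN_i}}^*$'' that will be propagated.

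For the inductive step from $i$ to $i+1$, first lift the self-dual extended GRS code over $F_{r^{sN_i}}$ (of length $\tau_i + 1$, $\tau_i := (t+1)r^e \prod_{j \leq i} e_1^{(j)}$ odd, support $\mathbf{a}_i \subseteq F_{r^{sN_i}}^*$) to $F_{r^{sN_{i+1}}}$ by reinterpreting its GRS data over the larger field. The self-duality criterion $\eta(-L_{\mathbf{a}_i}(\alpha)) = 1$ survives the lift, since a square in a subfield remains a square in any overfield. Now apply Lemma \ref{lem10} in $F_{r^{sN_{i+1}}}$ with $H_1 = F_{r^{sN_i}}^*$, which forces $e_1 = e_1^{(i+1)}$. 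Its two hypotheses must be verified: $e_1^{(i+1)}$ is odd because $m_{i+1}$ is odd, and $\eta(e_1^{(i+1)})=1$ in $F_{r^{sN_{i+1}}}$ because $r \equiv 0 \pmod{p}$ makes the integer $e_1^{(i+1)} \equiv 1 \pmod{p}$, so its image in $F_{r^{sN_{i+1}}}$ is $1$. The lemma yields a self-dual extended GRS code of length $\tau_i e_1^{(i+1)} + 1 = \tau_{i+1} + 1$ whose finite support $W = \bigcup_k \theta^{v(\alpha_k)} H_2$ automatically lies inside $F_{r^{sN_{i+1}}}^*$, reinstating the invariant for step $i+2$.

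After $l-1$ iterations we obtain the desired $q$-ary MDS self-dual code. The principal piece of bookkeeping is maintaining the ``support in $F_{r^{sN_i}}^*$'' invariant through the tower; this is what the translation at the base step buys, and what the coset structure of Lemma \ref{lem10}'s output preserves for free at every subsequent level (since $H_2 \subseteq F_q^*$ makes $0$ unreachable). A minor sanity check is that the $\eta(e_1^{(i+1)})=1$ condition in Lemma \ref{lem10} does not need to be assumed separately — it is free from characteristic $p$ alone — so no extra hypothesis beyond those inherited from Theorem \ref{th11} is required.
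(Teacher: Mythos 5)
Your proposal is correct and follows exactly the route the paper intends: the paper's entire proof of Corollary \ref{cor4} is the single sentence ``Using Lemma \ref{lem10} and Theorem \ref{th11}, we have the following result,'' and your tower-of-subfields iteration --- lifting self-duality through odd-degree extensions (squares stay squares) and checking that each blow-up index $e_1^{(i+1)}=1+r^{sN_i}+\cdots+r^{sN_i(m_{i+1}-1)}$ is odd and reduces to $1$ in characteristic $p$, so $\eta(e_1^{(i+1)})=1$ --- is the honest fleshing-out of that sentence, matching the paper's own use of this observation in Theorem \ref{th10}. The one caveat: Theorem \ref{th11} requires the quadratic-character hypothesis ($\eta(t)=\eta(-1)=1$, or $\eta(-t)=1$ with $e$ even), which the corollary's statement silently drops; your base step tacitly assumes it (``the conditions on $r,t,e$ carry over''), which is consistent with the paper's evident intent (compare Corollaries \ref{cor2} and \ref{cor3}) but should be made explicit, since without it the stated corollary does not follow.
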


When $q-1$ has two decompositions, we have the following result.

\begin{lemma}\label{lem two dec}
Let $q$ be a prime power. Suppose $q-1=e_1f_1=e_2f_2$, $F_q^*=\langle \theta\rangle$, $H=\langle \alpha \rangle$ and $M=\langle \beta\rangle$, where $\alpha=\theta^{e_1}$ and $\beta=\theta^{e_2}$. Let $\{\beta^{i_1},\beta^{i_2},\dots,\beta^{i_t}\}$ be a subset of $M$, where $\{i_1,i_2,\dots,i_t\}$ are distinct modulo $f_2$. Then $\beta^{i_\lambda}H(1\leq \lambda \leq t)$ are distinct cosets of $H$ in $F_q^*$ if and only if $i_1, i_2,\dots, i_t$ are distinct modulo $\frac{f_2}{\gcd(f_2,f_1)}$.
\end{lemma}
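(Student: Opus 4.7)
The plan is to reduce the question about cosets of $H$ to a divisibility condition on the exponents $i_\lambda$, and then use the relation $e_1 f_1 = e_2 f_2$ to rewrite this divisibility threshold in the form $f_2/\gcd(f_1,f_2)$ demanded by the statement.

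First, I would observe that two cosets $\beta^{i_\lambda}H$ and $\beta^{i_\mu}H$ coincide if and only if $\beta^{i_\lambda-i_\mu}\in H$. Writing $\beta=\theta^{e_2}$ and $H=\langle\theta^{e_1}\rangle$, this is equivalent to the divisibility condition $e_1\mid e_2(i_\lambda-i_\mu)$ in $\mathbb{Z}$ (using that $e_1\mid q-1$, so reducing the exponent mod $q-1$ does not change the condition). A standard number-theoretic manipulation then gives
\[
e_1\mid e_2(i_\lambda-i_\mu)\quad\Longleftrightarrow\quad \frac{e_1}{\gcd(e_1,e_2)}\,\Big|\,(i_\lambda-i_\mu).
\]

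Next I would simplify the modulus. Since $e_1=(q-1)/f_1$ and $e_2=(q-1)/f_2$, one has $\gcd(e_1,e_2)=(q-1)/\operatorname{lcm}(f_1,f_2)=(q-1)\gcd(f_1,f_2)/(f_1 f_2)$, and therefore
\[
\frac{e_1}{\gcd(e_1,e_2)}=\frac{f_2}{\gcd(f_1,f_2)}.
\]
Substituting back yields that $\beta^{i_\lambda}H=\beta^{i_\mu}H$ if and only if $i_\lambda\equiv i_\mu\pmod{f_2/\gcd(f_1,f_2)}$. Taking the contrapositive over all pairs $\lambda\neq\mu$ proves that the $t$ cosets are distinct exactly when $i_1,\ldots,i_t$ are pairwise distinct modulo $f_2/\gcd(f_1,f_2)$, which is the claim.

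The argument is essentially a bookkeeping calculation with no genuine obstacle; the only point that deserves care is the identity relating $e_1/\gcd(e_1,e_2)$ to $f_2/\gcd(f_1,f_2)$, together with the observation that although $i_\lambda,i_\mu$ are specified only modulo $f_2$ (since they index elements of $M$), the refined modulus $f_2/\gcd(f_1,f_2)$ is a divisor of $f_2$, so passing from congruence mod $f_2$ to congruence mod $f_2/\gcd(f_1,f_2)$ is well-defined and consistent with the hypothesis that the $i_\lambda$ are distinct modulo $f_2$.
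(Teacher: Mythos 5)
Your proof is correct and follows essentially the same route as the paper: both reduce $\beta^{i_\lambda}H=\beta^{i_\mu}H$ to the divisibility $e_1\mid e_2(i_\lambda-i_\mu)$, hence to $\frac{e_1}{\gcd(e_1,e_2)}\mid(i_\lambda-i_\mu)$, and then identify $\frac{e_1}{\gcd(e_1,e_2)}$ with $\frac{f_2}{\gcd(f_1,f_2)}$ using $e_1f_1=e_2f_2$ (the paper does this by substituting $e_1=e_2f_2/f_1$ and clearing denominators rather than via the $\operatorname{lcm}$ identity, a cosmetic difference). Your closing remark on well-definedness modulo the divisor $f_2/\gcd(f_1,f_2)$ is a small bonus the paper leaves implicit.
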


\begin{proof}
For $i, j\in Z_{f_2}$,
\[\begin{split}
	    \beta^iH=\beta^jH  &\Leftrightarrow \beta^{i-j}= \theta^{e_2(i-j)}\in H=\langle \theta^{e_1}\rangle\\
                           &\Leftrightarrow e_2(i-j)\equiv 0~({\rm mod} \,e_1)\\
                           &\Leftrightarrow \frac{e_1}{\gcd(e_1,e_2)}\mid i-j.\\
	\end{split}\]
Since $q-1=e_1f_1=e_2f_2$, we have
\[\begin{split}
	    \frac{e_1}{\gcd(e_1,e_2)}&=(\frac{e_2f_2}{f_1})\frac{1}{\gcd(\frac{e_2f_2}{f_1},e_2)}\\
                           &=\frac{e_2f_2}{\gcd(e_2f_2,e_2f_1)}\\
                           &=\frac{f_2}{\gcd(f_2,f_1)}.\\
	\end{split}\]
Therefore, $\beta^iH=\beta^jH$ if and only if $i\equiv j ({\rm mod}~\frac{f_2}{\gcd(f_2,f_1)})$. This completes the proof.
\end{proof}

\begin{lemma}\label{lem two tf1}
Let $t$ be even. Suppose $GRS_{\frac{t}{2}}(\mathbf{a},\mathbf{v})$ is self-dual for some $\mathbf{a}=\{\beta^{i_1f_1},\beta^{i_2f_1},\dots, \beta^{i_tf_1} \}\subseteq \langle \theta^{e_2f_1} \rangle$ and $\mathbf{v}\in (F_q^*)^{t}$. If $e_1$ and $e_2(f_1-1)$ are even, then there exists a $q$-ary MDS self-dual code of length $tf_1$.
\end{lemma}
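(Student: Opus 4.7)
My plan is to mirror the strategy of Lemma \ref{lem9} and Lemma \ref{lem10} but to enlarge $\mathbf{a}$ using cosets of $H=\langle\theta^{e_1}\rangle$ (each of size $f_1$) rather than cosets of $H_2$. For $1\le k\le t$, I would set $\gamma_k=\beta^{i_k}=\theta^{e_2 i_k}$, whose key property is $\gamma_k^{f_1}=\beta^{i_k f_1}=\alpha_k$; this ensures that raising any element of the coset $\gamma_k H$ to the power $f_1$ yields $\alpha_k$ uniformly, which is the direct analog of the identity $\beta_i^{e_1}=\alpha_k$ used in the proof of Lemma \ref{lem10}. Define $W=\bigcup_{k=1}^{t}\gamma_k H$. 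By Lemma \ref{lem two dec}, the cosets $\gamma_k H$ are pairwise distinct iff $i_1,\dots,i_t$ are distinct modulo $f_2/\gcd(f_2,f_1)$, and this is precisely the condition for the $\alpha_k=\beta^{i_k f_1}$ to be pairwise distinct; hence $|W|=tf_1$.

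Next, I would enumerate $W$ as $\mathbf{b}=(\beta_1,\dots,\beta_n)$ with $n=tf_1$, writing each $\beta_i=\gamma_k\theta^{e_1 u}$ for the appropriate $k$ and $u\in\{0,1,\dots,f_1-1\}$. Applying Lemma \ref{lem5}(1) to split $L_{\mathbf{b}}$ across the cosets and Lemma \ref{lem5}(2) to evaluate $L_{\gamma_k H}(\beta_i)=f_1\beta_i^{f_1-1}$ and $f_{\gamma_j H}(\beta_i)=\beta_i^{f_1}-\alpha_j$, the identity $\beta_i^{f_1}=\alpha_k$ collapses the product over $j\ne k$ to $L_{\mathbf{a}}(\alpha_k)$, giving
\[
L_{\mathbf{b}}(\beta_i)\;=\;f_1\,\theta^{e_2 i_k(f_1-1)-e_1 u}\,L_{\mathbf{a}}(\alpha_k).
\]
Taking $\eta$ and using $\eta(\theta)=-1$, the factor $(-1)^{e_2 i_k(f_1-1)}$ becomes $1$ under the hypothesis that $e_2(f_1-1)$ is even, and $(-1)^{e_1 u}$ becomes $1$ under the hypothesis that $e_1$ is even, so $\eta(L_{\mathbf{b}}(\beta_i))=\eta(f_1)\eta(L_{\mathbf{a}}(\alpha_k))$.

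By the self-duality of $GRS_{t/2}(\mathbf{a},\mathbf{v})$ together with Lemma \ref{lem1}, $\eta(L_{\mathbf{a}}(\alpha_k))$ is constant in $k$; hence $\eta(L_{\mathbf{b}}(\beta_i))$ is constant in $i$, and choosing $\lambda\in F_q^*$ with the appropriate quadratic character yields $\eta(\lambda L_{\mathbf{b}}(\beta_i))=1$ for every $i$. A final application of Lemma \ref{lem1} then produces an MDS self-dual $GRS_{n/2}(\mathbf{b},\mathbf{w})$ of length $tf_1$. The main subtlety will be the choice of coset representative $\gamma_k$: once $\gamma_k^{f_1}=\alpha_k$ is secured, the two parity hypotheses in the statement align precisely with the two $\theta$-power corrections introduced by the calculation, and everything else follows mechanically from Lemma \ref{lem5} and the self-duality of the original GRS code.
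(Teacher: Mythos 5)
Your proposal is correct and follows essentially the same route as the paper: the same union of $t$ cosets of $H=\langle\theta^{e_1}\rangle$, the same computation $L_{\mathbf b}(\beta_i)=f_1\theta^{e_2 i_k(f_1-1)}\theta^{-e_1 u}L_{\mathbf a}(\alpha_k)$ via Lemma \ref{lem5}/Lemma \ref{lem xin}, and the same parity argument to conclude that $\eta(L_{\mathbf b}(\beta_i))$ is constant, finishing with Remark \ref{rem1}. Your explicit appeal to the self-duality of $GRS_{t/2}(\mathbf a,\mathbf v)$ to get constancy of $\eta(L_{\mathbf a}(\alpha_k))$ (via the converse direction of Lemma \ref{lem1}) is in fact a point the paper leaves implicit.
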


\begin{proof}
Obviously, $\{i_1,i_2,\dots,i_t\}$ is a subset of $Z_{f_2}$ such that
$i_1,i_2,\dots,i_t$ are distinct modulo $\frac{f_2}{\gcd(f_2,f_1)}$. By Lemma \ref{lem two dec},
$\beta^{i_\lambda}H(1\leq \lambda\leq t)$ are distinct.
Then
\begin{equation}
S=\bigcup_{\lambda=1}^t\beta^{i_\lambda}H
\end{equation}
is a union of $t$ cosets of $H$ in $F_q^*$ and $|S|=t f_1$. For each $\gamma=\beta^{i_\lambda}\alpha^j\in S$,
by Lemma \ref{lem xin}, we have
\[\begin{split}
	        L_{S}(\gamma)  &=f'_S(\gamma)=g'(\gamma^{f_1})f_1\gamma^{f_1-1}\\
                           &=f_1\beta^{i_\lambda (f_1-1)}\theta^{je_1(f_1-1)}L_{\mathbf{a}}(\beta^{i_\lambda f_1})\\
                           &=f_1\theta^{e_2 i_\lambda (f_1-1)}\theta^{-je_1}L_{\mathbf{a}}(\beta^{i_\lambda f_1}),\\
	\end{split}\]
where $\mathbf{a}=\{\beta^{i_1f_1},\beta^{i_2f_1},\dots, \beta^{i_tf_1} \}\subseteq \langle\theta^{e_2f_1}\rangle$.
Note that if $e_1$ and $e_2(f_1-1)$ are even,
then $\eta(L_S(\gamma))$ are same for all $\gamma\in S$.
By Remark \ref{rem1}, there exists a $q$-ary MDS self-dual code of length $t f_1$. This completes the proof.
\end{proof}

\begin{remark}
Lemma \ref{lem two tf1} compared with Lemma \ref{lem9},
we expanded the scope of $f_1$,
so that $f_1$ can take an even number.
\end{remark}

\begin{lemma}\label{lem two tf11}
Let $t$ be even. Suppose $GRS_{\frac{t}{2}}(\mathbf{a},\mathbf{v})$ is self-dual for some $\mathbf{a}=\{\beta^{i_1f_1^2},\beta^{i_2f_1^2},\dots, \beta^{i_tf_1^2} \}\subseteq \langle\theta^{e_2f_1^2}\rangle$ and $\mathbf{v}\in (F_q^*)^{t}$. If $e_1$ is even, then there exists a $q$-ary MDS self-dual code of length $tf_1$.
\end{lemma}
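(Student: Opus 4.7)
The plan is to parallel the proof of Lemma~\ref{lem two tf1}, but to shift the coset representatives by an extra factor of $f_1$. Specifically, set
\[
S \;=\; \bigcup_{\lambda=1}^{t}\beta^{i_\lambda f_1} H,
\]
a union of $t$ cosets of $H=\langle\alpha\rangle$ in $F_q^*$, of total cardinality $tf_1$. My first step would be to check that these cosets are genuinely distinct: by Lemma~\ref{lem two dec} this reduces to distinctness of the $i_\lambda f_1$'s modulo $f_2/\gcd(f_2,f_1)$, which a short arithmetic calculation (using $\gcd(f_2,f_1^2)=\gcd(f_2,f_1)\gcd(\gcd(f_2,f_1),f_2/\gcd(f_2,f_1))$) shows is equivalent to the distinctness that is already forced on us by the hypothesis that $\mathbf{a}=\{\beta^{i_\lambda f_1^2}\}$ has $t$ distinct entries.

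Next I would compute $L_S(\gamma)$ for $\gamma = \beta^{i_\lambda f_1}\alpha^j \in S_\lambda := \beta^{i_\lambda f_1}H$. Applying Lemma~\ref{lem5}(2) to $S_\lambda$ gives $L_{S_\lambda}(\gamma) = f_1 \gamma^{f_1-1}$, and using $\alpha^{f_1}=\theta^{q-1}=1$ yields $\gamma^{f_1}=\beta^{i_\lambda f_1^2}$, so each remaining factor is $f_{S_\mu}(\gamma) = \beta^{i_\lambda f_1^2}-\beta^{i_\mu f_1^2}$. Consequently
\[
L_S(\gamma) \;=\; f_1\,\gamma^{f_1-1}\,L_{\mathbf{a}}\!\left(\beta^{i_\lambda f_1^2}\right).
\]
The heart of the argument is then the following parity observation on $\gamma^{f_1-1} = \theta^{(f_1-1)(e_2 i_\lambda f_1 + e_1 j)}$: the summand $e_1 j (f_1-1)$ is even because $e_1$ is even by hypothesis, while $e_2 i_\lambda f_1(f_1-1)$ is even because $f_1(f_1-1)$ is automatically even as a product of two consecutive integers. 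This is precisely where the extra factor of $f_1$ baked into the coset representatives pays off: it renders the condition ``$e_2(f_1-1)$ even'' that appeared in Lemma~\ref{lem two tf1} superfluous. Hence $\eta(\gamma^{f_1-1})=1$, and $\eta(L_S(\gamma)) = \eta(f_1)\,\eta(L_{\mathbf{a}}(\beta^{i_\lambda f_1^2}))$.

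To finish, I would invoke the hypothesis that $GRS_{t/2}(\mathbf{a},\mathbf{v})$ is self-dual: by Lemma~\ref{lem1} together with Remark~\ref{rem1}, $\eta(L_{\mathbf{a}}(\beta^{i_\lambda f_1^2}))$ takes a single value for all $1\le\lambda\le t$. Therefore $\eta(L_S(\gamma))$ is independent of $\gamma\in S$, and Lemma~\ref{lem1} supplies a vector $\mathbf{w}\in(F_q^*)^{tf_1}$ making $GRS_{tf_1/2}(S,\mathbf{w})$ an MDS self-dual code of length $tf_1$, as required. I expect the only delicate step to be the bookkeeping verification that the $t$ cosets $\beta^{i_\lambda f_1}H$ are pairwise distinct; once that is dispatched, the core of the proof is the simple-but-sharp observation that $f_1(f_1-1)$ is always even.
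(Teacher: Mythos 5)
Your proposal is correct and follows essentially the same route as the paper: the same union $S=\bigcup_{\lambda}\beta^{i_\lambda f_1}H$, the same appeal to Lemma~\ref{lem two dec} for distinctness of the cosets, the same factorization $L_S(\gamma)=f_1\gamma^{f_1-1}L_{\mathbf{a}}(\beta^{i_\lambda f_1^2})$ via Lemma~\ref{lem xin}, and the same parity point that $e_2 i_\lambda f_1(f_1-1)$ is automatically even (so only $e_1$ even is needed), concluding with Remark~\ref{rem1}. No substantive differences.
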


\begin{proof}
Since $\mathbf{a}=\{\beta^{i_1f_1^2},\beta^{i_2f_1^2},\dots, \beta^{i_tf_1^2} \}\subseteq \langle\theta^{e_2f_1^2}\rangle\subseteq \langle\theta^{e_2f_1}\rangle$, we have
 $$\beta^{i_1f_1^2},\beta^{i_2f_1^2},\dots, \beta^{i_tf_1^2}$$ are different elements in $\langle\theta^{e_2f_1}\rangle$. It follows that $i_1f_1,i_2f_1,\cdots,i_tf_1$ are distinct modulo $\frac{f_2}{\gcd(f_2,f_1)}$. By Lemma \ref{lem two dec}, $\beta^{i_\lambda f_1}H(1\leq \lambda\leq t)$ are distinct.
Then
\begin{equation}
S=\bigcup_{\lambda=1}^t\beta^{i_\lambda f_1}H
\end{equation}
is a union of $t$ cosets of $H$ in $F_q^*$ and $|S|=t f_1$. For each $\gamma=\beta^{i_\lambda f_1}\alpha^j\in S$, by Lemma \ref{lem xin}, we have
\[\begin{split}
	        L_{S}(\gamma)  &=f'_S(\gamma)=g'(\gamma^{f_1})f_1\gamma^{f_1-1}\\
                           &=f_1\beta^{i_\lambda f_1(f_1-1)}\theta^{-je_1}L_{\mathbf{a}}(\beta^{i_\lambda f_1^2}),\\
	\end{split}\]
where $\mathbf{a}=\{\beta^{i_1f_1^2},\beta^{i_2f_1^2},\dots, \beta^{i_tf_1^2} \}\subseteq \langle\theta^{e_2f_1^2}\rangle$.
Note that if $e_1$ is even, $\eta(L_S(\gamma))$ are same for all $\gamma\in S$. By Remark \ref{rem1},
there exists a $q$-ary MDS self-dual code of length $t f_1$. This completes the proof.
\end{proof}

Using the above lemmas, we have the following result, which can be regarded as the generalization of some results in \cite{RefJ20}.

\begin{theorem}\label{th tf tf+2}
Let $q=r^2$, where $r$ is an odd prime power. Let $q-1=e f$, where $e, f$ are positive integers. Let $s$ be a positive integer such that $s\mid f$ and $s\mid(r-1)$, and let $D=\frac{s(r+1)}{\gcd(s(r+1),f)}$. Let $1\leq t\leq D$ and $tf$ is even. The following hold.
\begin{itemize}
\item[(1)] If $e$ and $\frac{r-1+ft}{s}$ are even, then there exists a $q$-ary MDS self-dual code of length $t f$.
\item[(2)] When $1\leq t\leq D-1$, if $\frac{f}{s}$ and $\frac{1}{2}(t-1)(r+1)$ are even, or, $\frac{f}{s}$ is odd and $t$ is even, then there exists a $q$-ary MDS self-dual code of length $t f+2$.
\item[(3)] When $t=D$, if $\frac{ft}{s}$ and $\frac{t-1}{2}(r+1-\frac{ft}{s})$ are even, then there exists a $q$-ary MDS self-dual code of length $tf+2$.
\end{itemize}
\end{theorem}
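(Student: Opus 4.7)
The plan is to realize the evaluation points as a union of cosets of the multiplicative subgroup $H=\langle\theta^{e}\rangle$ of order $f$, with coset representatives drawn from $M=\langle\beta\rangle$ where $\beta=\theta^{(r-1)/s}$. Since $s\mid r-1$ the element $\beta$ has order $s(r+1)$, and Lemma \ref{lem two dec} says that the cosets $\beta^{i}H$ coincide exactly when the indices agree modulo $D=s(r+1)/\gcd(s(r+1),f)$, giving $D$ distinct cosets. Because $t\le D$, the plan is to pick $t$ pairwise distinct indices $i_1,\ldots,i_t$ (the precise choice to be adjusted in each part) and set $S=\bigcup_{\lambda=1}^{t}\beta^{i_\lambda}H\subseteq F_q^{*}$, a subset of size $tf$.

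For part (1), Lemma \ref{lem xin} factors $f_{S}(x)=g(x^{f})$ with $g(y)=\prod_{\lambda}(y-\beta^{i_\lambda f})$, so that for $\gamma=\beta^{i_\lambda}\alpha^{j}\in S$ with $\alpha=\theta^{e}$ one obtains
\begin{equation*}
L_{S}(\gamma)=f\,\gamma^{f-1}\,L_{\mathbf{a}}(\beta^{i_\lambda f})=f\,\beta^{i_\lambda(f-1)}\,\theta^{-je}\,L_{\mathbf{a}}(\beta^{i_\lambda f}),
\end{equation*}
where $\mathbf{a}=\{\beta^{i_\mu f}\}_{\mu}\subseteq\langle\beta^{f}\rangle$. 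Applying $\eta$ splits $\eta(L_{S}(\gamma))$ into a constant $\eta(f)$, a $j$-dependent sign $\eta(\theta)^{-je}$, and a $\lambda$-dependent contribution $\eta(\beta^{i_\lambda(f-1)}L_{\mathbf{a}}(\beta^{i_\lambda f}))$. The hypothesis ``$e$ even'' kills the $j$-dependence, and evaluating $L_{\mathbf{a}}(\beta^{i_\lambda f})$ via Lemma \ref{lem4} (when $t=D$, so that $\mathbf{a}$ is the full cyclic group $\langle\beta^{f}\rangle$) or by a direct product expansion (when $t<D$) reduces ``$\eta(L_{S}(\gamma))$ constant in $\lambda$'' to the parity requirement that $(r-1+ft)/s$ be even. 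Remark \ref{rem1} then produces a self-dual $GRS_{tf/2}(S,\mathbf{v})$, proving~(1).

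For parts (2) and (3), the plan is to augment $S$ by the element $0$ and form the extended GRS code $GRS_{(tf+2)/2}(S\cup\{0\},\mathbf{v},\infty)$ of length $tf+2$. Lemma \ref{lem5}(1) yields $L_{S\cup\{0\}}(\gamma)=\gamma\,L_{S}(\gamma)$ for $\gamma\in S$ and $L_{S\cup\{0\}}(0)=f_{S}(0)=\prod_{\lambda}(-\beta^{i_\lambda f})$. Writing $\gamma=\theta^{(r-1)i_\lambda/s+je}$, the new factor contributes a sign $\eta(\theta)^{(r-1)i_\lambda/s+je}$, which combined with the three factors inherited from part~(1) must be arranged so that $\eta(-L_{S\cup\{0\}}(x))=1$ for every $x\in S\cup\{0\}$. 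When $t<D$ there is still freedom in the choice of the $i_\lambda$, accommodating the hypotheses of~(2); when $t=D$ the set $\mathbf{a}=\langle\beta^{f}\rangle$ is forced and Lemma \ref{lem4} pins down $L_{\mathbf{a}}$ exactly, yielding~(3). An appeal to Lemma \ref{lem2} then delivers the required MDS self-dual extended GRS code.

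The main obstacle will be the parity bookkeeping: one must isolate exactly how the factors $\eta(f)$, $\eta(\theta^{-je})$, $\eta(\beta^{i_\lambda(f-1)})$, $\eta(L_{\mathbf{a}}(\beta^{i_\lambda f}))$, and (in parts~(2),(3)) the extra $\eta(\gamma)$ combine, and then select the representatives $i_\lambda$ so that the three parity statements in~(1)--(3) each collapse the resulting product to the constant $+1$.
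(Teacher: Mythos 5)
Your setup coincides with the paper's: the coset decomposition $S=\bigcup_{\lambda}\beta^{i_\lambda}H$ with $\beta=\theta^{(r-1)/s}$ of order $s(r+1)$, the count of distinct cosets via Lemma \ref{lem two dec}, the factorization $L_S(\gamma)=f\beta^{i_\mu(f-1)}\theta^{-je}L_{\mathbf{a}}(\beta^{i_\mu f})$ coming from Lemma \ref{lem xin} and the proof of Lemma \ref{lem two tf1}, the adjunction of $0$ for parts (2) and (3), and the final appeals to Remark \ref{rem1} and Lemma \ref{lem2}. But there is a genuine gap at the one step that carries all of the difficulty: evaluating $\eta\bigl(L_{\mathbf{a}}(\beta^{i_\mu f})\bigr)$ for an \emph{arbitrary} $t$-subset $\{\beta^{i_1 f},\dots,\beta^{i_t f}\}$ of the cyclic group $\langle\beta^{f}\rangle$. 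You handle $t=D$ by Lemma \ref{lem4} (legitimate, since then $\mathbf{a}$ is the whole group of order $D$), but for $t<D$ you offer only ``a direct product expansion'', and no such closed form exists: $\prod_{\lambda\neq\mu}(\beta^{i_\mu f}-\beta^{i_\lambda f})$ over a proper subset of a cyclic group has no usable product formula, and its quadratic character genuinely depends on the chosen exponents. Since the theorem's hypotheses --- $\frac{r-1+ft}{s}$ even, $\frac{1}{2}(t-1)(r+1)$ even, the quantity $I=\sum_\lambda i_\lambda$ entering through $\frac{fI}{s}$ --- are precisely the output of this evaluation, the ``parity bookkeeping'' you defer as the main obstacle cannot even be started with the tools you name.

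The missing idea is a Frobenius (conjugation) argument available only because $q=r^2$ and $s\mid f$: one checks $(\beta^{i_\mu f})^{r}=\theta^{\frac{f}{s}i_\mu r(r-1)}=\beta^{-i_\mu f}$, so the points of $\mathbf{a}$ lie in the subgroup of order $r+1$, on which $x\mapsto x^{r}$ acts as inversion. Raising $L_{\mathbf{a}}(\beta^{i_\mu f})$ to the $r$-th power then permutes the factors and reproduces $L_{\mathbf{a}}(\beta^{i_\mu f})$ up to the explicit monomial $(-1)^{t-1}\beta^{-i_\mu f(t-2)-fI}$; hence $\bigl(L_{\mathbf{a}}(\beta^{i_\mu f})\bigr)^{r-1}$ is an explicit power of $\theta$, which pins down the discrete logarithm of $L_{\mathbf{a}}(\beta^{i_\mu f})$ modulo $r+1$. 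Because $r+1$ is even, this determines $\eta\bigl(L_{\mathbf{a}}(\beta^{i_\mu f})\bigr)$ exactly, and the stated parity conditions in (1)--(3) drop out (with the residual freedom in choosing $i_1,\dots,i_t$ used, as you anticipated, in Case 2 of part (2)). Without this step or an equivalent, your proposal does not establish the theorem for any $t<D$, and even for $t=D$ it does not derive the specific condition appearing in part (3).
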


\begin{proof}
(1) Let $F_q^*=\langle\theta\rangle$, $\beta=\theta^{\frac{r-1}{s}}$, $\alpha=\theta^e$ and $H=\langle\alpha\rangle$. By Lemma \ref{lem two dec}, we can prove that $S=\bigcup_{\lambda=1}^t\beta^{i_\lambda} H $ is a disjoint union of $t$ coests of $H$ in $F_q^*$ and  $|S| =t f$. For $\gamma=\beta^{i_\mu}\alpha^j\in S$, by the proof of Lemma \ref{lem two tf1}, we have
\begin{equation}\label{eq3}
 L_{S}(\gamma)=f\beta^{i_\mu(f-1)}\theta^{-je}L_{\mathbf{a}}(\beta^{i_\mu f}),
\end{equation}
where $\mathbf{a}=\{\beta^{i_\mu f}: 1\leq \mu\leq t\}$. It follows from $s\mid f $ that
\begin{equation}
(\beta^{i_\mu f})^r=\theta^{\frac{f}{s}i_\mu r(r-1)}=\theta^{\frac{f}{s}i_\mu (1-r)}=\beta^{-i_\mu f}.
\end{equation}
Therefore,
\[\begin{split}
	        (L_{\mathbf{a}}(\beta^{i_\mu f}))^r &= \prod_{\lambda=1 ,\lambda\neq \mu}^{t}(\beta^{-i_\mu f}-\beta^{-i_\lambda f}) \\
                                              &= \prod_{\lambda=1 ,\lambda\neq \mu}^{t}\frac{\beta^{i_\lambda f}-\beta^{i_\mu f}}{\beta^{(i_\mu+i_\lambda)f}}\\
                                              &= (-1)^{t-1}\beta^{-i_\mu f(t-2)-fI} L_{\mathbf{a}}(\beta^{i_\mu f}),\\
	\end{split}\]
where $I=\sum_{\lambda=1}^t i_\lambda$. It then follows that
\begin{equation}
(L_{\mathbf{a}}(\beta^{i_\mu f}))^{r-1}=\theta^{\frac{1}{2}(t-1)(r^2-1)-f\frac{r-1}{s}[i_\mu (t-2)+I]}
\end{equation}
and
\begin{equation}
L_{\mathbf{a}}(\beta^{i_\mu f})=\theta^{\frac{1}{2}(t-1)(r+1)-\frac{f}{s}[i_\mu (t-2)+I]+(r+1)c},
\end{equation}
for some $c\in Z$. Therefore,
\[\begin{split}
	        L_{S}(\gamma)&=f \theta^{i_\mu(f-1)\frac{r-1}{s}-je+\frac{1}{2}(t-1)(r+1)-\frac{f}{s}[i_\mu (t-2)+I]+(r+1)c}\\
                         &=f\theta^{i_\mu[\frac{(f-1)(r-1)-f(t-2)}{s}]}\theta^{-je}\theta^{\frac{1}{2}(t-1)(r+1)-\frac{fI}{s}+(r+1)c}.\\
 \end{split}\]
Consequently,
\[\begin{split}
	        \eta(L_{S}(\gamma))&=(-1)^{i_\mu[\frac{(r-1)+ft}{s}]}(-1)^{-je}(-1)^{\frac{1}{2}(t-1)(r+1)-\frac{fI}{s}}.\\
 \end{split}\]
If $e$ and $\frac{(r-1)+ft}{s}$ are even,
then $\eta(L_{S}(\gamma))$ are same for all $\gamma\in S$.
By Remark \ref{rem1},
there exists a $q$-ary MDS self-dual code of length $tf$.

(2) Let $\tilde{S}=S\cup\{0\}$ and $|\tilde{S}|=tf+1$. For $\gamma=\beta^{i_\mu}\alpha^j\in S$,
$L_{\tilde{S}}(\gamma)=L_S(\gamma)\gamma$. By (\ref{eq3}), we have
\[\begin{split}
	       L_{\tilde{S}}(\gamma)&=f\beta^{i_\mu f}L_{\mathbf{a}}(\beta^{i_\mu f})\\
                                &=f\theta^{i_\mu[\frac{f(r-1)-f(t-2)}{s}]}\theta^{\frac{1}{2}(t-1)(r+1)-\frac{fI}{s}+(r+1)c},\\
 \end{split}\]
for some integer $c\in Z$. Hence,
\[\begin{split}
	     \eta(-L_{\tilde{S}}(\gamma))&=(-1)^{i_\mu[\frac{f(r-1-t)}{s}]+\frac{1}{2}(t-1)(r+1)-\frac{fI}{s}} \\
                                &=(-1)^{i_\mu(\frac{ft}{s})}(-1)^{\frac{1}{2}(t-1)(r+1)-\frac{fI}{s}}.\\
 \end{split}\]
 If $\frac{ft}{s}$ and $\frac{1}{2}(t-1)(r+1)-\frac{fI}{s}$ are even, we have $\eta(-L_{\tilde{S}}(\gamma))=1$. For $1\leq t\leq D-1$, there two cases:
\begin{itemize}
\item Case 1. If $\frac{f}{s}$ and $\frac{1}{2}(t-1)(r+1)$ are even, we have $\eta(-L_{\tilde{S}}(\gamma))=1$.
\item Case 2. If $\frac{f}{s}$ is odd and $t$ is even, we can always choose suitable $i_1,i_2,\dots,i_t$ such that $\frac{r+1}{2}+I$ is even. Hence, $\eta(-L_{\tilde{S}}(\gamma))=1$.
\end{itemize}
For $0\in \tilde{S}$,
\[\begin{split}
	       L_{\tilde{S}}(0)&=(-1)^{tf}\prod_{\lambda=1}^{t}\prod_{j=0}^{f-1}(\beta^{i_\lambda}\alpha^j)\\
                                &=(-1)^{tf}\beta^{fI}\alpha^{\frac{tf(f-1)}{2}}\\
                                &=(-1)^{tf}\theta^{\frac{(r-1)fI}{s}}\theta^{\frac{etf(f-1)}{2}}.\\
 \end{split}\]
 Since $ef=q-1=r^2-1\equiv 0({\rm mod}~4)$,
 $$\eta(-L_{\tilde{S}}(0))=(-1)^{\frac{(r-1)fI}{s}}=1.$$
 By Lemma \ref{lem2}, there exists a $q$-ary MDS self-dual code of length $t f+2$.

(3) Let $\gamma=\beta^{i_\mu}\alpha^j\in S$, For $t=D$, note that $I=\sum_{\lambda=1}^t i_\lambda=\frac{t(t-1)}{2}$.
If $\frac{ft}{s}$ and $\frac{t-1}{2}(r+1-\frac{ft}{s})$ are even, then $\eta(-L_{\tilde{S}}(\gamma))=1$.
For $0\in \tilde{S}$, by the proof of (2),
 $$\eta(-L_{\tilde{S}}(0))=(-1)^{\frac{(r-1)fI}{s}}=1.$$
 By Lemma \ref{lem2}, there exists a $q$-ary MDS self-dual code of length $t f+2$. This completes the proof.
\end{proof}

Similarly, we have the following result, which can be regarded as the generalization of some results in \cite{RefJ20} and \cite{RefJ15}.

\begin{theorem}\label{th tf tf+1 tf+2}
Let $q=r^2$, where $r$ is an odd prime power. Let $q-1=e f$, where $e$ and $f$ are positive integers. Let $s$ be a positive integer such that $s\mid f$ and $s\mid (r+1)$. Let $D=\frac{s(r-1)}{\gcd(s(r-1),f)}$ and $1\leq t\leq D$. If $tf$ is odd, then there exists a $q$-ary MDS self-dual code of length $tf+1$.
\end{theorem}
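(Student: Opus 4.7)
The plan is to mirror the proof of Theorem \ref{th tf tf+2}(1), with the roles of $r-1$ and $r+1$ interchanged. The hypothesis $s\mid(r+1)$ (in place of $s\mid(r-1)$) will place the relevant $\beta$-powers inside the subfield $F_r^*$, after which a short character count produces an MDS self-dual extended GRS code of length $tf+1$ via Lemma \ref{lem2}.

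Concretely, writing $F_q^*=\langle\theta\rangle$, I would set $\beta=\theta^{(r+1)/s}$ (so $|\langle\beta\rangle|=s(r-1)$), $\alpha=\theta^e$, and $H=\langle\alpha\rangle$. Applying Lemma \ref{lem two dec} with $e_1=e$, $f_1=f$, $e_2=(r+1)/s$, $f_2=s(r-1)$ gives $\frac{f_2}{\gcd(f_1,f_2)}=D$, so since $t\le D$ I can pick $i_1,\ldots,i_t$ pairwise distinct modulo $D$ and form
\[
S=\bigcup_{\lambda=1}^{t}\beta^{i_\lambda}H,\qquad \mathbf{a}=\{\beta^{i_\lambda f}:1\le\lambda\le t\},
\]
a disjoint union of $t$ cosets of $H$ with $|S|=tf$. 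Exactly as in the proof of Lemma \ref{lem two tf1}, Lemma \ref{lem xin} then yields
\[
L_S(\gamma)=f\,\beta^{i_\mu(f-1)}\,\theta^{-je}\,L_{\mathbf{a}}(\beta^{i_\mu f}),\qquad\gamma=\beta^{i_\mu}\alpha^j\in S.
\]

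The crux is that $s\mid(r+1)$ together with $s\mid f$ forces $\beta^{i_\mu f}=\theta^{i_\mu f(r+1)/s}\in F_r^*=\langle\theta^{r+1}\rangle$, so $L_{\mathbf{a}}(\beta^{i_\mu f})\in F_r^*$ as well. Because $r+1$ is even, $F_r^*\subseteq(F_q^*)^2$, giving $\eta(L_{\mathbf{a}}(\beta^{i_\mu f}))=1$; by the same reason $\eta(f)=1$ (the integer $f$ lies in $F_p^*\subseteq F_r^*$, and $p\nmid f$ since $f\mid q-1$). The remaining factors are harmless: $\beta^{i_\mu(f-1)}$ is a square because $tf$ odd makes $f-1$ even, and $\theta^{-je}$ is a square because $f$ odd together with $ef=q-1\equiv 0\pmod 4$ makes $e$ even. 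Combined with $\eta(-1)=1$ (from $q=r^2\equiv1\pmod 4$), this yields $\eta(-L_S(\gamma))=1$ for every $\gamma\in S$, and Lemma \ref{lem2} then produces a $q$-ary MDS self-dual code of length $tf+1$.

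The main obstacle, compared with Theorem \ref{th tf tf+2}(1), is in fact the step that now disappears: there one had to compute $(L_{\mathbf{a}}(\beta^{i_\mu f}))^{r-1}$ because $\beta^{i_\mu f}$ did not sit inside $F_r^*$. Here that calculation collapses to the elementary observation $F_r^*\subseteq(F_q^*)^2$, and the only bookkeeping left is to extract the parities of $e$ and $f-1$ from the single hypothesis ``$tf$ odd''.
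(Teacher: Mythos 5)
Your proposal is correct and follows essentially the same route as the paper: the same choice $\beta=\theta^{(r+1)/s}$, the same coset decomposition via Lemma \ref{lem two dec} and the $L_S$ formula from Lemma \ref{lem two tf1}, and the same key observation that $s\mid f$ and $s\mid(r+1)$ force $\beta^{i_\mu f}$ (hence $L_{\mathbf{a}}(\beta^{i_\mu f})$) into $F_r^*\subseteq(F_q^*)^2$, after which the parity bookkeeping on $e$ and $f-1$ and Lemma \ref{lem2} finish the argument. The only cosmetic difference is that the paper verifies $L_{\mathbf{a}}(\beta^{i_\mu f})\in F_r$ by checking it is fixed under the Frobenius $x\mapsto x^r$, whereas you read it off directly from the exponent of $\theta$; these are equivalent.
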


\begin{proof}
Let $F_q^*=\langle\theta\rangle$, $\beta=\theta^{\frac{r+1}{s}}$, $\alpha=\theta^e$ and $H=\langle\alpha\rangle$. By Lemma \ref{lem two dec},
we can prove that $S=\bigcup_{\lambda=1}^t\beta^{i_\lambda} H $ is a disjoint union of $t$ coests of $H$ in $F_q^*$ and $|S|=t f$. For $\gamma=\beta^{i_\mu}\alpha^j\in S$, by the proof of Lemma \ref{lem two tf1}, we have
\begin{equation}\label{eq4}
 L_{S}(\gamma)=f\beta^{i_\mu(f-1)}\theta^{-je}L_{\mathbf{a}}(\beta^{i_\mu f}),
\end{equation}
where $\mathbf{a}=\{\beta^{i_\mu f}: 1\leq \mu\leq t\}$. It follows from $s\mid f $ that
\begin{equation}
(\beta^{i_\mu f})^r=\theta^{\frac{f}{s}i_\mu r(r+1)}=\theta^{\frac{f}{s}i_\mu (r+1)}=\beta^{i_\mu f}.
\end{equation}
Hence, $(L_{\mathbf{a}}(\beta^{i_\mu f}))^r=L_{\mathbf{a}}(\beta^{i_\mu f})$. It then follows that $L_{\mathbf{a}}(\beta^{i_\mu f})\in F_r$ and
$$\eta(L_{\mathbf{a}}(\beta^{i_\mu f}))=1.$$
Note that $e$ and $f-1$ are even,
so $\eta(-L_{S}(\gamma))=1$ for all $\gamma\in S$.
By Lemma \ref{lem2},
there exists a $q$-ary MDS self-dual code of length $tf+1$.
\end{proof}

\begin{lemma}\label{lem lager q}
Let $q$ be an odd prime power. For any given $n$, if $q> (t+(t^2+(n-1)2^{n-2})^{\frac{1}{2}})^2$, where $t=(n-3)2^{n-3}+\frac{1}{2}$,
then there exists a subset $\mathbf{a}=\{\alpha_1,\alpha_2,\dots,\alpha_n\}$ of $F_q$ such that $\alpha_j-\alpha_i$ are nonzero square elements for all $1\leq i<j\leq n$.
\end{lemma}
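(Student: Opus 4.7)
The plan is a greedy construction of $\mathbf{a}$, where character-sum estimates guarantee at each stage of an inductive build that there is at least one valid way to extend the current set. Fix $k\leq n-1$ and suppose $\alpha_{1},\dots,\alpha_{k}$ have been chosen with $\eta(\alpha_{j}-\alpha_{i})=1$ for all $1\leq i<j\leq k$. Setting $g(\alpha)=\prod_{i=1}^{k}\frac{1+\eta(\alpha-\alpha_{i})}{2}$, the number of admissible choices for $\alpha_{k+1}\in F_{q}$ is
\begin{equation*}
f_{k}=\sum_{\alpha\in F_{q}}g(\alpha)-\sum_{j=1}^{k}g(\alpha_{j}),
\end{equation*}
since on $F_{q}\setminus\{\alpha_{1},\dots,\alpha_{k}\}$ the function $g$ takes values in $\{0,1\}$ and is exactly the indicator of a valid extension.

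The next step is to expand $g$ as $2^{-k}\sum_{T\subseteq\{1,\dots,k\}}\prod_{i\in T}\eta(\alpha-\alpha_{i})$ and swap summations. The term $T=\emptyset$ contributes the main term $q/2^{k}$; for each nonempty $T$ the polynomial $\prod_{i\in T}(x-\alpha_{i})$ is squarefree of degree $|T|$, so Weil's bound for multiplicative character sums gives $\bigl|\sum_{\alpha}\eta\bigl(\prod_{i\in T}(\alpha-\alpha_{i})\bigr)\bigr|\leq (|T|-1)\sqrt{q}$. Collecting errors through the identity $\sum_{j=1}^{k}(j-1)\binom{k}{j}=(k-2)2^{k-1}+1$ produces a clean lower bound for $\sum_{\alpha}g(\alpha)$. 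For the correction, each $g(\alpha_{j})$ already carries the factor $(1+\eta(0))/2=1/2$ coming from $i=j$, while the remaining $k-1$ factors lie in $[0,1]$; hence $g(\alpha_{j})\leq 1/2$ and the total correction is at most $k/2$. Combining these estimates yields
\begin{equation*}
f_{k}\;\geq\;\frac{q-\bigl[(k-2)2^{k-1}+1\bigr]\sqrt{q}}{2^{k}}\;-\;\frac{k}{2}.
\end{equation*}

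To finish I would specialize at $k=n-1$. Writing $2t=(n-3)2^{n-2}+1$ recovers the constant $t=(n-3)2^{n-3}+\tfrac{1}{2}$ of the statement, and the requirement $f_{n-1}>0$ rearranges to the quadratic inequality $(\sqrt{q})^{2}-2t\sqrt{q}-(n-1)2^{n-2}>0$ in $\sqrt{q}$; its positive root is $t+\sqrt{t^{2}+(n-1)2^{n-2}}$, which is precisely the hypothesis of the lemma. Because the analogous lower bound is monotone increasing in $k$, the same threshold handles every earlier extension step as well, so the induction runs from $k=0$ up to $k=n-1$ and produces the desired $\mathbf{a}$. I expect the main technical point to be the joint bookkeeping of the Weil estimates and of the correction from $\alpha\in\{\alpha_{1},\dots,\alpha_{k}\}$; the saving grace is that $\eta(0)=0$ already forces $g(\alpha_{j})\leq 1/2$ independent of the value of $\eta(-1)$, so the argument is insensitive to the residue of $q$ modulo $4$.
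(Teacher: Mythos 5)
Your proof is correct and follows essentially the same route as the paper: build $\mathbf{a}$ one element at a time and lower-bound the number of admissible extensions by a quadratic-character count, arriving at exactly the estimate $N\geq \frac{q}{2^{n-1}}-(\frac{n-3}{2}+\frac{1}{2^{n-1}})\sqrt{q}-\frac{n-1}{2}$ that the paper simply quotes from Lidl--Niederreiter (Exercise 5.64), whereas you supply its derivation from the Weil bound. One wording slip: it is the threshold on $q$ (the positive root of the quadratic in $\sqrt{q}$), not the lower bound $f_k$ itself, that is monotone increasing in $k$ --- but that is what your argument actually uses, so the conclusion stands.
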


\begin{proof}
We prove it by induction on $n$.
For $n=2$, we can let $\mathbf{a}=\{0,1\}$.
Suppose that there exists a subset $\mathbf{a}=\{\alpha_1,\alpha_2,\dots,\alpha_{n-1}\}$ of $F_q$ of size $n-1$ such that
$\alpha_j-\alpha_i$ are nonzero square elements for all $1\leq i<j\leq n-1$.
Let $N$ denote the number of elements $\beta$ of $F_q$ such that $\eta(\beta-\alpha_i)=1$ for all $i=1,2,\dots,n-1$.
Then by [\cite{RefJ19}, Exercise 5.64], one has
 \begin{equation}
\left| N-\frac{q}{2^{n-1}}\right| \leq \left(\frac{n-3}{2}+\frac{1}{2^{n-1}}\right)\sqrt{q}+\frac{n-1}{2}.
\end{equation}
Thus we have
\begin{equation}
 N\geq \frac{q}{2^{n-1}}-\left(\frac{n-3}{2}+\frac{1}{2^{n-1}}\right)\sqrt{q}-\frac{n-1}{2}.
\end{equation}
Let $y(x)=\frac{x^2}{2^{n-1}}-(\frac{n-3}{2}+\frac{1}{2^{n-1}})x-\frac{n-1}{2}$,
when $x> t+(t^2+(n-1)2^{n-2})^{\frac{1}{2}}$, we have $y(x)> 0$. By our condition on $n$ and $q$, this implies that exists an element $\alpha_n$ such that $\alpha_n-\alpha_i$ are nonzero square elements of $F_q$, for all $i=1,2,\dots,n-1$. This completes the proof.
\end{proof}

\begin{theorem}\label{th lager q}
Let $q\equiv 1({\rm mod}~4)$ and let $n$ be an even integer such that $q>(t+(t^2+(n-1)2^{n-2})^{\frac{1}{2}})^2$, where $t=(n-3)2^{n-3}+\frac{1}{2}$. Then there exists a $q$-ary self-dual MDS code of length $n$.
\end{theorem}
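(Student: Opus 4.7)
The plan is to combine Lemma~\ref{lem lager q} with Remark~\ref{rem1} in a straightforward way. The hypothesis on $q$ is exactly what Lemma~\ref{lem lager q} requires, so we are immediately handed a set $\mathbf{a}=\{\alpha_1,\alpha_2,\dots,\alpha_n\}\subseteq F_q$ of size $n$ (which is even) with the property that $\alpha_j-\alpha_i$ is a nonzero square for every $1\leq i<j\leq n$. The goal is to take this $\mathbf{a}$ as the evaluation set of a GRS code of dimension $k=n/2$.

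Next I would compute the quadratic character of $L_{\mathbf{a}}(\alpha_i)=\prod_{j\neq i}(\alpha_i-\alpha_j)$. Split the product into the $j<i$ factors and the $j>i$ factors. For $j<i$, the factor $\alpha_i-\alpha_j$ is a nonzero square by construction. For $j>i$, we write $\alpha_i-\alpha_j=-(\alpha_j-\alpha_i)$, so $\eta(\alpha_i-\alpha_j)=\eta(-1)\cdot 1=\eta(-1)$. Since $q\equiv 1\pmod 4$ forces $\eta(-1)=1$, every factor is a square, whence $\eta(L_{\mathbf{a}}(\alpha_i))=1$ for all $1\leq i\leq n$.

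Therefore $\eta(L_{\mathbf{a}}(\alpha_i))$ takes the same value (namely $1$) for every $i$, so Remark~\ref{rem1} (equivalently Lemma~\ref{lem1} with $\lambda=1$ and $v_i^2=L_{\mathbf{a}}(\alpha_i)^{-1}$) produces an MDS self-dual code $GRS_{n/2}(\mathbf{a},\mathbf{v})$ of length $n$ over $F_q$. There is no real obstacle here: the work was done in Lemma~\ref{lem lager q}, and the role of the congruence $q\equiv 1\pmod 4$ is only to guarantee $\eta(-1)=1$ so that the asymmetry between the $j<i$ and $j>i$ factors disappears.
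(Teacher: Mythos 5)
Your proposal is correct and follows the paper's own argument essentially verbatim: invoke Lemma~\ref{lem lager q} to get the evaluation set with all pairwise differences nonzero squares, use $q\equiv 1\pmod 4$ to get $\eta(-1)=1$ so that every factor of $L_{\mathbf{a}}(\alpha_i)$ is a square, and conclude via Lemma~\ref{lem1}/Remark~\ref{rem1}. Your write-up is in fact slightly more explicit than the paper's about the $j<i$ versus $j>i$ split and the choice $\lambda=1$.
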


\begin{proof}
By Lemma \ref{lem lager q}, there exists a subset $\mathbf{a}=\{\alpha_1,\alpha_2,\dots,\alpha_n\}$
such that $\alpha_i-\alpha_j$ are square elements for all $1\leq i<j\leq n$. Since $q\equiv 1({\rm mod}~4)$, $\eta(-1)=1$. Therefore, $\beta-\gamma$ is a nonzero square for any two distinct elements $\beta,\gamma\in \mathbf{a}$. It follows that $\eta(L_{\mathbf{a}}(\alpha_i))=1$ for all $1\leq i\leq n$. Then there exists a $q$-ary self-dual MDS code with length $n$. This completes the proof.
\end{proof}

\begin{remark}
Theorem \ref{th lager q} generalizes the previous conclusion,
compared with Theorem 3.2 (ii) in \cite{RefJ7}, the range of $q$ in Theorem \ref{th lager q} is wider.
\end{remark}

\section{Conclusions}\label{sec-conclusion}
In this paper we construct some new classes of $q$-ary MDS self-dual codes of different lengths by (extended) generalized Reed-Solomon codes (see Theorems \ref{th1}, \ref{th2}, \ref{th3}, \ref{th4}, \ref{th8}, \ref{th9}, \ref{th10}, \ref{th11}, \ref{th tf tf+2}, \ref{th tf tf+1 tf+2}, \ref{th lager q}). We generalize some previous works on the existence of MDS self-dual codes, and some known results can be considered as special cases of this construction.


\section*{Acknowledgments}
This research was supported by the National Natural Science Foundation of China (No.U21A20428 and 12171134).


\end{document}